 \renewcommand{\added}[1]{#1}
\newtheorem {Proposition}{Proposition}[section]
\newtheorem {Lemma}[Proposition] {Lemma}
\newtheorem {Theorem}[Proposition]{Theorem}
\newtheorem {Corollary}[Proposition]{Corollary}
\newtheorem {Remark}[Proposition]{Remark}
     \theoremstyle{plain}
     \newtheorem{assumption}{Assumption}
\def\N{\mathbb{N}}
\def\Z{\mathbb{Z}}
\def\R{\mathbb{R}}
\def\x{\mathbf{x}}
\def\z{\mathbf{z}}
\def\y{\mathbf{y}}
\def\p{\mathbf{p}}
\def\bx{\bar{\mathbf{x}}}
\def\by{\bar{\mathbf{y}}}
\def\bz{\bar{\mathbf{z}}}
\numberwithin{equation}{section}
\begin{document}

\begin{frontmatter}

\title{Two-sample goodness-of-fit tests on the flat torus based on Wasserstein distance  and their relevance to structural biology}
\runtitle{???}

\author{\fnms{Javier} \snm{González-Delgado}$^{1,2}$\ead[label=e1]{javier.gonzalez-delgado@math.univ-toulouse.fr}},
\author{\fnms{Alberto} \snm{González-Sanz}$^{1,3}$\ead[label=e2]{alberto.gonzalez$\_$sanz@math.univ-toulouse.fr}},
\author{\fnms{\\Juan} \snm{Cortés}$^{2}$\ead[label=e4]{juan.cortes@laas.fr}}
\and
\author{\fnms{Pierre} \snm{Neuvial}$^{1}$\ead[label=e3]{pierre.neuvial@math.univ-toulouse.fr}}

\address{\printead{e1}}
\address{\printead{e2}}
\address{\printead{e4}}
\address{\printead{e3}}

\address{$^{1}$ Institut de Mathématiques de Toulouse, UMR 5219, Université de Toulouse, CNRS.}
\address{$^{2}$ LAAS-CNRS, Université de Toulouse, CNRS.}
\address{$^{3}$ ImUva, Universidad de Valladolid.}

\runauthor{González-Delgado et al.}
\runtitle{Wasserstein tests on the flat torus}

\begin{abstract}
This work is motivated by the study of local protein structure, which is defined by two variable dihedral angles that take values from probability distributions on the flat torus. Our goal is to provide the space $\mathcal{P}(\mathbb{R}^2/\mathbb{Z}^2)$ with a metric that quantifies local structural modifications due to changes in the protein sequence, and to define associated two-sample goodness-of-fit testing approaches. Due to its adaptability to the geometry of the underlying space, we focus on the Wasserstein distance as a metric between distributions. 

We extend existing results of the theory of Optimal Transport to the $d$-dimensional flat torus $\mathbb{T}^d=\mathbb{R}^d/\mathbb{Z}^d$, in particular a Central Limit Theorem for the fluctuations of the empirical optimal transport cost. Moreover, we propose different approaches for two-sample goodness-of-fit testing for the one and two-dimensional case, based on the Wasserstein distance. We prove their validity and consistency. We provide an implementation of these tests in \textsf{R}. Their performance is assessed by numerical experiments on synthetic data and illustrated by an application to protein structure data. 
\end{abstract}


\begin{keyword}
\kwd{Optimal Transport}
\kwd{Flat Torus}
\kwd{Wasserstein distance}
\kwd{Central Limit Theorem}
\kwd{Goodness-of-fit test}
\kwd{Structural biology}
\kwd{Intrinsically disordered proteins}
\end{keyword}



\end{frontmatter}

\tableofcontents


\section{Introduction}

When it comes to measure the distance between two probability distributions, the well known Wasserstein distance, derived from the theory of Optimal Transport (OT),
provides both strong theoretical guarantees --it metrizes weak convergence \cite{Villani2008}-- and attractive empirical performance \cite{Cuturi2018}.
Most of the applications of such theory are related to the very active field of machine learning, notably in the framework of generative networks \cite{arjovsky}, robustness \cite{serrurier2020achieving} or fairness \cite{Gordaliza2019}, among others.

From a statistical point of view, one of the main caveats of the theory of OT comes from the \emph{curse of dimensionality}: the rate of convergence of the empirical Wasserstein distance decreases as $n^{-1/d}$ with the dimension \cite{Fournier2016}.
Another important issue is the asymptotic behavior of the fluctuations of the empirical optimal transport cost
. For probability measures supported in $\R^d$, it has been proved, using Efron–Stein's inequality that, for the cost $L^2$, the difference $\sqrt{n}(\mathcal{W}_2^2(P_n,Q)-\mathbb{E}\mathcal{W}_2^2(P_n,Q))$ is asymptotically Gaussian \cite{DelBarrio2019}. Recently, the proofs have been extended to some general costs in $\R^d$, including the cost $L^p$, for $p>1$ \cite{Alberto}. Concerning statistical goodness-of-fit tests based on Wasserstein distance, the one-sample case has already been addressed in \cite{Hallin2021} and, when the probability distributions are defined over $\mathbb{R}$, two-sample tests can be derived from \cite{Barrio1999, Munk1998NonparametricVO}.

In this paper, we focus on the $d$-dimensional flat torus $\mathbb{T}^d:=\mathbb{R}^d/\mathbb{Z}^d$ where, even from the purely theoretical point of view, OT has not been completely addressed, besides the work in \cite{CORDEROERAUSQUIN1999199}, \cite{McCann} or, more recently, in \cite{manole2021plugin}.  However, this space appears naturally when the probability measures are periodic (e.g. for distributions of angles). The main objective of this work is (1) to extend recent existing OT results to the space of probability measures on the flat torus $\mathcal{P}(\mathbb{T}^d)$, especially a Central Limit Theorem (CLT) for the fluctuations of the empirical optimal transport cost, and (2) to address in particular the two-dimensional case, by constructing two-sample goodness-of-fit tests based on the Wasserstein distance.

Our motivation for extending the theory of OT to $\mathbb{T}^2$ comes from the investigation of proteins. Understanding the relationships between protein sequence, structure and function is the main goal of Structural Biology. In addition to its scientific importance, a better understanding of these relationships is essential for applications in diverse areas, such as biomedicine and biotechnology. The conformational state of a protein can be defined by a vector of angles, corresponding to rotations around the chemical bonds between the atoms that constitute its ``backbone''. This vector contains two values per amino-acid, $\phi$ and $\psi$, which follow a certain distribution, and which are usually represented using the so-called Ramachandran plots \cite{Ramachandran:1963} (see also Figure~\ref{fig:example_flory_ejs}). The analysis of these distributions has several important applications, such as the validation or refinement of protein structures determined from biophysical techniques \cite{Morris:1992,Lovell:2003}, the prediction of some biophysical measurements to complement experiments \cite{Shen2017}, and the development of potential energy models or scoring methods for protein structure modeling, prediction and design \cite{Betancourt2004,Rata2010,Ting2010}.

In this context, the definition of a suitable distance between distributions on $\mathbb{T}^2$ is essential. This would allow to quantify the expected magnitude of structural effects associated with local changes in the sequence, and therefore  to develop improved versions of the aforementioned modeling and prediction techniques. Nevertheless, this has not been done satisfactorily in previous works. For example, significant differences between two laws are stated after visual comparison of two empirical distributions in \cite{Rata2010} and \cite{Shen2017}, and the Hellinger distance is used to compare distributions on a non-periodic $[-\pi,\pi]\times[-\pi,\pi]$  in \cite{Ting2010}. Powerful statistical tests remain to be defined and implemented in order to state such differences, being based on a metric that takes geometry into consideration. As many other commonly-used metrics, Hellinger distance ignores the underlying geometry of the space. Here, we propose to use the Wasserstein distance, whose advantageous geometrical and mathematical properties are described in \cite{Cuturi2018}, \cite{Villani2003} and \cite{Villani2008}, to define goodness-of-fit testing techniques for two measures on $\mathbb{T}^2$, allowing a more accurate study of the distribution of protein local conformations.\\

The paper is organized as follows:
\begin{itemize}
\item Section~\ref{section_measures} starts by introducing  the general framework of measures on the flat torus in general dimension,  followed by the precise formulation of the optimal transport problem. Section~\ref{sectionolutions} is devoted to the study of the shape of the solutions, recalling that they are the gradients of periodic convex functions and showing the uniqueness of the potential in Corollary~\ref{Corollary:equals}.  Section~\ref{section_asyimptotics} proves through Theorem~\ref{Theoremeq} that the optimal transport potentials converge, up to an additive constant, when the measures converge weakly. This result implies that the method of \cite{DelBarrio2019} based on Efron–Stein's inequality can be applied to derive a Central Limit Theorem,  see Theorem~\ref{Theo:centrallim2} in Subsection~\ref{section:normality}. Finally, we show how the previously defined CLT does not allow the definition of an asymptotic test.
\item Section~\ref{section:test} shows how Wasserstein distance can be used to define two-sample goodness-of-fit tests in the two-dimensional flat torus. We propose two testing approaches. The first one, introduced in Subsection \ref{sec:test_geodesics}, consists in testing the equality of two measures projected into a finite number of closed geodesics on $\mathbb{T}^2$. The second, presented in Subsection \ref{sec:upper_bound}, is a conservative procedure based on upper-bounding the exact $p$-values. This is possible thanks to a concentration inequality given in Theorem~\ref{Theorem::bound}, together with faster convergence rates for the expectation.
\item Section~\ref{simulations} reports numerical experiments illustrating the relevance of these theoretical results, first with synthetic data and then with real data from protein structures, showing that our methods behave well in both cases.
\end{itemize}
To facilitate reading, the proofs are relegated to the Appendix, but in some cases the intuition behind the proof is provided in the main text for clarity.

\section{Optimal transport in \texorpdfstring{$\mathbb{R}^d/\mathbb{Z}^d$}{the flat torus}}\label{section_measures}\label{first_section}

Let $\mathbb{T}^d:=\mathbb{R}^d/\mathbb{Z}^d$ be defined as the quotient space derived from the equivalence relation $\x \mathcal{R} \y $ if $\x-\y\in \Z^d$. For each $\x\in \R^d$ we denote as $\bx\in \mathbb{T}^d$ its equivalence class and reserve the notation $\tau$ for the canonical projection map
$
\x\mapsto \tau(\x)=\bx.
$
The topology of the quotient space is defined as the finest one that makes $\tau$ continuous.  With this topology, the space $\mathbb{T}^d$ is a Polish space with the distance derived from the Euclidean norm $\|\cdot \|$,
\begin{equation}\label{ground_distance}
d(\bx,\by):=\inf_{\p\in\Z^d}\| \x-\y+\p\|.
\end{equation}
Note that the last claim is true since the projection map $\tau$ is in fact a metric identification,  $(\R^d,\|\cdot \|)$ is a Banach space and $\Z^d$ is a closed subset, then it is complete, metrizable through $d$ and separable.  

Set $p> 1$. For two probability measures $P,Q\in \mathcal{P}(\mathbb{T}^d)$,  a probability measure $\pi \in \mathcal{P}(\mathbb{T}^d\times\mathbb{T}^d)$ is said to be an \emph{optimal transport plan for the cost $d^p$} between $P$ and $Q$ if it solves
\begin{align}\label{kant}
\mathcal{T}_p(P,Q) := \inf_{\gamma \in \Pi(P,Q)}\int_{\mathbb{T}^d\times \mathbb{T}^d} d^p(   \bx,\by) d \gamma(\bx, \by),
\end{align}
where $\Pi(P,Q)$ is the set of probability measures $\gamma\in \mathcal{P}(\mathbb{T}^d \times\mathbb{T}^d)$  such that $\gamma(A\times \R^d)=P(A)$ and $\gamma(\mathbb{T}^d \times B)=Q(B)$ for all Borel measurable subsets $A,B$ of $\mathbb{T}^d$. 

The Kantorovich problem \eqref{kant} can be formulated in a dual form, as follows
\begin{align}\label{dual}
\mathcal{T}_p(P,Q)=\sup_{(f,g)\in \Phi_p(P,Q)}\int_{\mathbb{T}^d} f(\bx) dP(\bx)+\int_{\mathbb{T}^d} g(\by) dQ(\by),
\end{align}
where $$\Phi_p(P,Q)=\{ (f,g)\in L_1(P)\times L_1(Q): \ f(\bx)+g(\by)\leq d^p(\bx,\by)\quad\forall\, \bx,\by\in\mathbb{T}^d\}.$$
The element $\psi\in L_1(P)$ is said to be an \emph{optimal transport potential from $P$ to $Q$ for the cost $d^p$} if there exists $\varphi\in L_1(Q)$ such that the pair $(\psi, \varphi)$ solves \eqref{dual}.  Recall from \cite{Villani2008} that the solutions of \eqref{dual} are pairs $(f,g)$ of $d^p$-conjugate $d^p$-concave functions.  This means  that
\begin{align}\label{dconcave}
f(\x)=\inf_{\by\in\mathbb{T}^d}\{d(\bx,\by)^p-g(\by)\} \ \text{and} \ \ g(\by)=f^{d^p}(\by)=\inf_{\bx\in\mathbb{T}^d}\{d(\bx,\by)^p-f(\bx)\} .
\end{align}
Furthermore, since $\mathbb{T}^d$ is a Polish space, then Theorem 4.1  in  \cite{Villani2008} implies that there exists a solution $\pi^*$ of \eqref{kant}. Additionally, Theorem 5.10 in \cite{Villani2008} establishes that  $\text{supp}(\pi^*)$ is \emph{$d^p$-cyclically monotone}. This means that for any finite sequence $\{(\x_k,\y_k) \}_{k=1}^n\subset \text{supp}(\pi^*)$ and any bijection ${\sigma:\{ 1, \dots, n \}\rightarrow \{ 1, \dots, n \}}$, the following inequality holds:
\begin{equation*}
\sum_{k=1}^n d^p(\bx_k, \by_k)\leq \sum_{k=1}^n d^p(\bx_k, \by_{\sigma(k)}).
\end{equation*}
Note that, if $Q$ is a probability measure in $\mathbb{T}^d$, its support is defined as the closed set $\text{supp}(Q)\subset\mathbb{T}^d$ composed by $\bx\in\mathbb{T}^d$ such that for any neighborhood $\mathcal{U}_{\bx}$ of $\bx$ it holds that $Q(\mathcal{U}_{\bx})>0$. The interior of the support is denoted by $\mathcal{X}_Q.$ 

With the same obvious notation we can define a \emph{$\| \cdot\|^p$-cyclically monotone} set. Note that for $p=2$,  {$\| \cdot\|^2$-cyclical monotonicity } is equivalent to the concept of cyclical monotonicity in convex analysis,  described in  \cite{RockConvex}. Recall that a set $A\subset\R^d\times \R^d$ is \emph{cyclically monotone} if for every finite sequence $\{(\x_k,\y_k) \}_{k=1}^n\subset A$ and every bijection ${\sigma:\{ 1, \dots, n \}\rightarrow \{ 1, \dots, n \}}$ it holds that
\begin{equation*}
\sum_{k=1}^n \langle \x_k, \y_k\rangle\geq \sum_{k=1}^n \langle \x_k, \y_{\sigma(k)}\rangle.
\end{equation*}
Consequently, the concept of $d^p$ (resp. $\|\cdot\|^p$) -cyclical monotonicity is the natural generalization, to other spaces and costs, of cyclical monotonicity. 

In some cases, that we will study later on,  there exists some measurable map $T$ such that the optimal transport plan $\pi$ satisfies $\pi=\left(I\times T\right){\#}P$, where the symbol $T{\#}P$ denotes the \emph{push forward measure} of $P$ through $T$, which is defined by $T{\#}P(A):=P(T^{-1}(A))$, for all measurable $A\subset\mathbb{T}^d$, and $I$ denotes the identity map.
Therefore, the problem becomes equivalent to the following \emph{Monge} formulation:
\begin{align}\label{monge}
\mathcal{T}_p(P,Q)=\inf_{T{\#}P=Q}\int_{\mathbb{T}^d} d^p(   \bx,T(\bx)) d P(\bx).
\end{align}

\subsection{Existence of $\| \cdot \|^p$-cyclically monotone mappings.}\label{sectionolutions}
A cyclically monotone map is the natural generalization of a non decreasing function in the real line (as being the gradient of a convex function, see \cite{RockConvex}). Cyclical monotonicity provides a powerful tool for statistical studies, see \cite{Hallin2021, DELBARRIO2020104671, hallinChernozhukov2017} among others. The existence of cyclically monotone maps between probability measures in $\R^d$ has been investigated, in parallel, by  \cite{cuesta1989notes} and \cite{brenier1991polar}, with the restrictive assumption of finite second order moment, relaxed in \cite{McCann1995ExistenceAU}. For periodic measures, the celebrated result of \cite{CORDEROERAUSQUIN1999199} showed the existence. The concept of cyclically monotone map also appears naturally when solving an optimal transport problem with quadratic cost in $\R^d$. Therefore, for any potential cost $\| \cdot \|^p$, the natural generalization is the one of $\| \cdot \|^p$-cyclically monotone. In fact, \cite{GangboMccann} proved the existence of a $\| \cdot \|^p$-cyclically monotone mapping between  probability measures with finite moment of order $p>1$. To the authors' knowledge, no previous work has dealt with the existence of $\| \cdot \|^p$-cyclically monotone mappings between periodic probability measures. Consequently, the main result of this section is Theorem~\ref{TheoremExistenceOfArrangements}, which  shows the existence and uniqueness of a $\| \cdot\|^p$-cyclically monotone preserving map $\mathbf{S}_p$ between periodic measures, for $p>1$, and relates it with the solution of \eqref{monge}.  Then, Theorem~\ref{Corollary:equals} guarantees, under certain assumptions of regularity on the support of $P$, that the solution of \eqref{dual} is unique up to an additive constant.

Note that, in practice, a probability $P\in \mathcal{P}(\mathbb{T}^d)$ defines a periodic measure $\mu_P\in \mathcal{M}(\mathbb{R}^d)$ w.r.t. any $\p\in \Z^d$.  In other words,  $T_{\p} \# \mu_P=\mu_P$,  for all $\p\in \Z^d$, where $T_{\p}:\R^d\rightarrow \R^d$ is the shift operator $\x\mapsto \x+\p.$ A  measure $\mu_P$ is \emph{periodic} if it is the natural extension of some probability measure $P\in \mathcal{P}(\mathbb{T}^d)$. As anticipated, the goal of this section is to show the existence of $\| \cdot \|^p$-cyclically monotone mappings between two periodic measures $\mu_P,\mu_Q\in \mathcal{M}(\mathbb{R}^d)$ absolutely continuous w.r.t. the Lebesgue measure on $\R^d$, denoted as $\mu_P, \mu_Q\ll \ell_d$. 
As commented before, \cite{CORDEROERAUSQUIN1999199} established the existence of a  $\| \cdot \|^2$-cyclically monotone map (which a.s. is the gradient of a convex function $\varphi$) such that $\nabla \varphi \# \mu_P=\mu_Q$.  Theorem 1.25 in \cite{Santambrogio} entails that there is a unique solution of the Monge problem in the torus, described by the relation $T=\x-\nabla f(\x)$, where the sum is to be intended modulo $\Z^d$ and $f$ is an optimal transport potential for the quadratic cost.  Note that this is a quite similar relation (between potentials and transport) to the one in the quadratic transport problem in $\R^d$.  

The proof of Theorem~\ref{TheoremExistenceOfArrangements} starts by realizing that since $\mathbb{T}^d$ is a Polish space, then Theorem 4.1 in  \cite{Villani2008} implies that there exists a solution $\pi^*$ of \eqref{kant}. Furthermore, Theorem 5.10 in \cite{Villani2008} establishes that  $\text{supp}(\pi^*)$ is $d^p$-cyclically monotone, which implies that the set 
\begin{equation}\label{Gamma}
\Gamma=\{ (\x+p,\y+p):\  ( \bx, \by)\in\text{supp}(\pi^*),\ \x\in[0,1]^d, \ d(\bx, \by)=\| \x-\y\|\ \text{and}\ p\in\Z^d\}
\end{equation}
is cyclically monotone. Corollary 3.5 in ~\cite{GangboMccann} implies that this cyclically monotone set is contained in the graph of a $\|\cdot\|^p$-differential
$$\partial^{\|\cdot\|^p}\varphi_p(\x)=\{\by:\ \varphi_p(\z)\leq \varphi_p(\x)+ \|\z-\y\|^p-\|\x-\y\|^p, \text{ for all $\z\in\R^d$}\}$$
of a $\|\cdot\|^p$-concave function  $\varphi_p$ (defined as in \eqref{dconcave} but replacing $d^p$ with $\|\cdot\|^p$). In conclusion, the a.s. uniqueness of this $\|\cdot\|^p$-differential ends the proof.

\begin{Theorem}\label{TheoremExistenceOfArrangements}
Let $P,Q\in \mathcal{P}(\mathbb{T}^d)$ be probability measures such that $\mu_P\ll \ell_d$. Then, there exists a unique solution $\mathbf{T}_p$ of \eqref{monge}. Moreover, there exists a $\mu_P$-a.e. defined $\| \cdot \|^p$-cyclically monotone map $\mathbf{S}_p$ such that
\begin{itemize}
\item the relation $\mathbf{T}_p\circ\tau=\tau\circ(\mathbf{S}_p) $ holds $\mu_P$-almost surely,
\item and $\mathbf{S}_p\# \mu_P=\mu_Q$.
\end{itemize}
\end{Theorem}
The following result gives the uniqueness, up to additive constants, of the optimal transport potential, where the assumptions are given with respect to its associated periodic measures. In particular, we need to have  \emph{negligible boundary} of $\mu_P$ which means that  the boundary of its support has Lebesgue measure $0$, $\ell_d(\partial\,\text{supp}(\mu_P))=0$. The proof investigates the intrinsic relation between the optimal transport potentials and  the previously described $\mathbf{T}_p$, which allows the use of general results for the uniqueness of $\| \cdot \|^p$-concave functions (see \cite{Alberto}) which have the same gradient a.s. in a connected domain of $\mathbb{R}^d$.

\begin{Theorem}\label{Corollary:equals}
Let $P,Q\in \mathcal{P}(\mathbb{T}^d)$  be probability measures with connected support such that their associated periodic measures satisfy $\mu_P, \mu_Q\ll \ell_d$ with negligible boundary. Then, there exists a unique, up to an additive constant,   $d^p$-concave function $f_p$ solution of  \eqref{dual}.
\end{Theorem}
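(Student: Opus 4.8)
The plan is to leverage Theorem~\ref{Theorem:existence}, which already provides a unique convex function $\varphi$ on $\R^d$ with $\nabla\varphi\#\mu_P=\mu_Q$ and $T\circ\tau=\tau\circ(\nabla\varphi)$ $\mu_P$-a.s. The key observation is that an optimal transport potential $f$ for \eqref{dual} is tied to $\varphi$ via the relation $T=x-\nabla f$ (mod $\Z^d$) from Theorem 1.25 in \cite{Santambrogio}, so that $\nabla f(x) = x - \nabla\varphi(x)$ modulo $\Z^d$, and in fact, since both sides lift to $\R^d$, one gets $\nabla\varphi(x) = x - \nabla f(x)$ as an honest equality $\mu_P$-almost everywhere (the function $x\mapsto \tfrac12\|x\|^2 - f(x)$ being a lift of a convex function whose gradient agrees with $\nabla\varphi$). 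Thus any two $d^2$-concave solutions $f_1,f_2$ of \eqref{dual} produce convex functions $\varphi_1,\varphi_2$ with $\nabla\varphi_1 = \nabla\varphi_2$ $\mu_P$-a.e.; by the uniqueness part of Theorem~\ref{Theorem:existence}, $\varphi_1=\varphi_2$ (up to the inherent additive constant in the definition of a potential of a gradient), hence $\nabla f_1 = \nabla f_2$ $\mu_P$-a.e.

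First I would make precise the passage from ``$\nabla f_1=\nabla f_2$ $\mu_P$-a.e.'' to ``$f_1 = f_2 +$ const''. This is where the absolute continuity $\mu_P\ll\ell_d$ is used twice: it guarantees that $\mu_P$-a.e.\ equality of gradients is Lebesgue-a.e.\ equality on the interior of $R_P$, and it guarantees (via the periodicity of $\mu_P$ and the assumption that $\mu_P$ has full-dimensional support, or by passing to the connected components) that the relevant region is a connected open subset of $\R^d$ on which a classical lemma applies: two convex (hence locally Lipschitz, hence differentiable a.e.) functions with a.e.-equal gradients on a connected open set differ by a constant. Here I would either invoke directly the convexity of the lifts $x\mapsto\tfrac12\|x\|^2-f_i(x)$, or note that $d^2$-concave functions on $\mathbb{T}^d$ are semiconcave and Lipschitz, so the same ``connected $+$ a.e.\ equal gradient $\Rightarrow$ differ by a constant'' argument runs. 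The periodicity then forces the additive constant to be globally well defined on $\mathbb{T}^d$, giving $f_1 = f_2 + c$ as functions on the torus.

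The main obstacle I anticipate is the connectedness/support issue: the argument ``equal gradients a.e.\ implies equal up to a constant'' genuinely requires a connected domain, and $\mathrm{supp}(Q)=\mathrm{int}(R_Q)$ (and similarly for $P$) need not be connected in general. I would handle this by first treating the case where $\mathrm{supp}(P)$ is connected (or, as is implicitly the standing setup, where the periodic densities are positive a.e., so the effective domain is all of $\R^d$, which is connected), and then remarking that the additive constant is forced to be the same across components because the $d^2$-conjugation $f\mapsto f^{d^2}$ couples the two potentials: fixing $\varphi$ up to a constant fixes $f$ up to a constant through $f(x) = \inf_y\{d^2(x,y)-f^{d^2}(y)\}$, and $f^{d^2}$ is itself pinned down up to the same constant. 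The other point requiring a little care is justifying that $x\mapsto\tfrac12\|x\|^2-f(x)$ is indeed convex on $\R^d$ and is a lift of the object appearing in Theorem~\ref{Theorem:existence}; this follows from $d^2$-concavity of $f$ together with the explicit infimal-convolution representation of $\varphi$ built from the cyclically monotone set $\Gamma$ in \eqref{Gamma}, so I would spell out that identification rather than treat it as routine.
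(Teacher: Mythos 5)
Your proposal follows essentially the same route as the paper. The paper's proof constructs exactly the auxiliary convex function you describe (there written $\tilde f(x)=\tfrac12\left(\|x\|^2-f(\bar x)\right)$, extended to $g_f$ by an infimal-convolution built from $\Gamma$), shows via Lemma~\ref{Lemma:relation-convex} that $\nabla g_f=\nabla\varphi$ holds $\ell_d$-a.e.\ on $\mathcal{K}(P)$, and then invokes the ``a.e.-equal gradients on a connected open set $\Rightarrow$ differ by a constant'' principle (Theorem 2.6 of \cite{Alberto}); you identify the same auxiliary object, the same a.e.\ gradient matching, and the same use of absolute continuity and connectedness, so I regard this as the same argument modulo a harmless normalization slip in the definition of $\tilde f$. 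One substantive point you raise that the paper does not: Lemma~\ref{Lemma:relation-convex}, on which the paper's proof of this theorem relies, explicitly assumes $\operatorname{supp}(P)$ connected with negligible boundary, hypotheses absent from the theorem statement itself (they reappear only in Theorem~\ref{Theoremeq}); the paper simply sidesteps disconnectedness, so your $d^2$-conjugation idea for pinning down the constant across components is a genuine addition, though it would need to be written out carefully before being trusted as a proof.
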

\added{
The assumption of connected support can be relaxed, via \cite[Theorem 2]{Staudt2022OnTU}, to the setting where both measures have disconnected support. If the supports of $\mu_P$ and $\mu_Q$ decompose into closures of connected open components
\begin{equation}
    \label{eq:uniquenessMunk}
    \text{supp}(\mu_P)= \bigcup_{i\in \mathcal{I}}\mathcal{X}_{i,\mu_P}, \quad \text{supp}(\mu_Q)=\bigcup_{j\in \mathcal{J}}\mathcal{X}_{j,\mu_Q}, 
\end{equation}
where $\mathcal{I}$ is finite index set and $\mathcal{J}$ is a countable index set, then, assuming for all non-empty proper $\mathcal{I}'\subset \mathcal{I}$ and $\mathcal{J}'\subset J$ that 
\begin{equation}
    \label{eq:uniquenessMunk2}
    \sum_{i\in \mathcal{I}'}\mu_P(\mathcal{X}_{i,\mu_P} )\neq \sum_{j\in \mathcal{J}'}\mu_Q(\mathcal{X}_{j,\mu_Q} ),
\end{equation}
it follows by \cite[Lemma 5]{Staudt2022OnTU} that no
degenerate transport plan exists. Hence, invoking Theorem 2 in \cite{Staudt2022OnTU} in conjunction with Theorem~\ref{Corollary:equals}, yields an extension of the uniqueness result to measures with disconnected support.}

\begin{Corollary}\label{Corollary:equals2}
\added{
Let $P,Q\in \mathcal{P}(\mathbb{T}^d)$  be probability measures such that their associated periodic measures satisfy $\mu_P, \mu_Q\ll \ell_d$ with negligible boundary where \eqref{eq:uniquenessMunk} and \eqref{eq:uniquenessMunk2} hold. Then, there exists a unique, up to an additive constant,   $d^p$-concave function $f_p$ solution of  \eqref{dual}.
}\end{Corollary}

The importance of Corollary~\ref{Corollary:equals2} mainly lies in that it enables the study of the asymptotic behavior of the potential, allowing us to apply Arzelá-Ascoli like reasoning, as explained in the following section.

\subsection{Asymptotic behaviour}\label{section_asyimptotics}
This section deals with the asymptotic properties of the transport map and potentials.  We consider two sequences of probability measures $\{ \alpha_n\}_{n\in \N}, \{ \beta_n\}_{n\in \N}\subset \mathcal{P}(\mathbb{T}^d)$ converging weakly to $P$ and $Q$ respectively,
\begin{equation*}
\alpha_n\xrightarrow{w} P \ \ \text{and} \ \ \beta_n\xrightarrow{w} Q.
\end{equation*}
Since $\mathbb{T}^d$  is compact,  here the weak convergence is in the sense that for every continuous function $h\in \mathcal{C}(\mathbb{T}^d)$, $\int h(\bx)d\alpha_n(\bx)\rightarrow \int h(\bx)dP(\bx)$. Once again, thanks to that compactness  the existence of moments of any order is always fulfilled for $P\in \mathcal{P}(\mathbb{T}^d) $. As a consequence, Theorem~7.12~in~\cite{Villani2003} implies that $\alpha_n\xrightarrow{w} P$ if and only if the $p$-\emph{Wasserstein distance} $\mathcal{W}_p(\alpha_n,P):=\left(\mathcal{T}_p(\alpha_n,P)\right)^{\frac{1}{p}}$ tends to $0$. An analogous reasoning implies the convergence $\mathcal{T}_p(\alpha_n,\beta_n)\rightarrow\mathcal{T}_p(P,Q)$ for the two-sample case.

The idea of this section is to take advantage of the fact that any $d^p$-concave function $f$ is continuous whereby it is finite. Moreover, it has bounded continuity modulus, so we can  apply Arzelá-Ascoli's Theorem by fixing the constants.
\begin{Lemma}\label{Lemma:lip}
Every $d^p$-concave function $f$ is Lipschitz (in its definition domain $\operatorname{dom}(f)$) with constant $L= 2\,p\, d^{\frac{p-1}{2}}$, with respect to the metric \eqref{ground_distance}. 
\end{Lemma}

The proof of the next Theorem first proceeds by choosing the sequence $\{a_n\}_{n\in \N}$ to guarantee the uniform boundedness of the sequence  $\{(f_n,g_n)\}_{n\in\N}$ of solutions of \eqref{dual}.  This, together with Lemma~\ref{Lemma:lip} and Arzelá-Ascoli's Theorem,  implies that $\{(f_n,g_n)\}_{n\in\N}$ is relatively compact.  The uniqueness of solutions of \eqref{dual}, described in Theorem~\ref{Corollary:equals}, allows us to conclude.
\begin{Theorem}\label{Theoremeq}
Let $P,Q\in \mathcal{P}(\mathbb{T}^d)$  be probability measures with connected supports whose associated periodic measures satisfy $\mu_P, \mu_Q\ll \ell_d$ with negligible boundary.
Let $\{ \alpha_n\}_{n\in \N}$ and $\{\beta_n\}_{n\in \N}\subset \mathcal{P}(\mathbb{T}^d)$ be two sequences of probability measures converging weakly to $P$ and $Q$ respectively.  Denote by  $(f_n,g_n)$ (resp.  $(f,g)$) the solution of the dual problem between $\alpha_n$ and $\beta_n$ (resp. $P$ and $Q$). Then there exists a sequence of real numbers $\{a_n\}_{n\in \N}$ such that $f_n+a_n\rightarrow f$ uniformly on the compact sets of $\mathcal{X}_P$.
\end{Theorem}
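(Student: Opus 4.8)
The plan is to prove Theorem~\ref{Theoremeq} via a compactness argument in $\mathcal{C}(\mathbb{T}^d)$, following the strategy outlined in the paragraph preceding the statement. I will work with normalized versions of the dual potentials and extract a uniformly convergent subsequence, then identify the limit as the (essentially unique) potential $f$ between $P$ and $Q$; since every subsequence has a further subsequence converging to the same limit, the whole (normalized) sequence converges.

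\medskip

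\textbf{Step 1: Normalization and uniform bounds.} For each $n$, let $(f_n,g_n)$ solve the dual problem \eqref{dual} between $\alpha_n$ and $\beta_n$. Since the dual problem is invariant under $(f_n,g_n)\mapsto(f_n+c,g_n-c)$, I may assume a convenient normalization, e.g. $\int_{\mathbb{T}^d} f_n\, d\alpha_n = 0$, or alternatively fix the value of $f_n$ at a point; I will set $a_n$ to be precisely the additive constant implementing this normalization at the end. By Lemma~\ref{Lemma:lip}, each $f_n$ (and each $g_n$, being $d^2$-concave) is $2$-Lipschitz with respect to $d$. Because $\mathbb{T}^d$ is compact with finite diameter $D=\sqrt d/2$, a $2$-Lipschitz function with zero mean (or fixed value at a point) satisfies $\|f_n\|_\infty \le 2D$; the same bound, up to a constant depending only on $d$, holds for $g_n$ using the relation $g_n = f_n^{d^2}$ and the compactness of $\mathbb{T}^d$. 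Hence $\{f_n\}$ and $\{g_n\}$ are uniformly bounded and uniformly (in fact equi-Lipschitz) equicontinuous families.

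\medskip

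\textbf{Step 2: Arzelà–Ascoli and passing to the limit.} By the Arzelà–Ascoli theorem on the compact metric space $(\mathbb{T}^d,d)$, the sequence $\{(f_n,g_n)\}$ is relatively compact in $\mathcal{C}(\mathbb{T}^d)\times\mathcal{C}(\mathbb{T}^d)$ with the uniform topology. Let $(f_{n_k},g_{n_k})\to(\tilde f,\tilde g)$ uniformly along a subsequence. I must check that $(\tilde f,\tilde g)\in\Phi_2(P,Q)$ and that it is optimal. The constraint $\tilde f(\bar x)+\tilde g(\bar y)\le d^2(\bar x,\bar y)$ passes to the limit pointwise from $f_{n_k}(\bar x)+g_{n_k}(\bar y)\le d^2(\bar x,\bar y)$. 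For optimality, I use that weak convergence $\alpha_n\xrightarrow{w}P$, $\beta_n\xrightarrow{w}Q$ implies $\mathcal{T}_2(\alpha_n,\beta_n)\to\mathcal{T}_2(P,Q)$ — this follows from stability of optimal transport under weak convergence (Theorem~5.20 in \cite{Villani2008}), or directly from the fact, recalled in the text, that $\mathcal{W}_2$ metrizes weak convergence together with the triangle inequality for $\mathcal{W}_2$. Then
\begin{align*}
\int_{\mathbb{T}^d}\tilde f\,dP+\int_{\mathbb{T}^d}\tilde g\,dQ
&=\lim_{k}\left(\int_{\mathbb{T}^d} f_{n_k}\,d\alpha_{n_k}+\int_{\mathbb{T}^d} g_{n_k}\,d\beta_{n_k}\right)
=\lim_{k}\mathcal{T}_2(\alpha_{n_k},\beta_{n_k})=\mathcal{T}_2(P,Q),
\end{align*}
where the first equality combines uniform convergence of the integrands with weak convergence of the measures (a standard ``uniform convergence beats weak convergence'' estimate, splitting $\int f_{n_k}d\alpha_{n_k}-\int\tilde f\,dP$ into $\int(f_{n_k}-\tilde f)d\alpha_{n_k}$, bounded by $\|f_{n_k}-\tilde f\|_\infty$, plus $\int\tilde f\,d\alpha_{n_k}-\int\tilde f\,dP\to 0$). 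Hence $(\tilde f,\tilde g)$ is an optimal pair for $(P,Q)$.

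\medskip

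\textbf{Step 3: Identification of the limit and conclusion.} By Theorem~\ref{Corollary:equals}, the $d^2$-concave optimal potential between $P$ and $Q$ is unique up to an additive constant; since $P,Q$ have connected supports and $\mu_P,\mu_Q\ll\ell_d$, the hypotheses of that theorem are met. A subtlety: the limit $\tilde f$ need not a priori be $d^2$-concave, only a member of $\Phi_2(P,Q)$; but replacing $\tilde f$ by its $d^2$-concavification $(\tilde f^{d^2})^{d^2}\ge\tilde f$ only increases $\int\tilde f\,dP$ while preserving admissibility, so by optimality $\tilde f$ agrees $P$-a.e. with a $d^2$-concave function, and by $2$-Lipschitz continuity (Lemma~\ref{Lemma:lip}) and connectedness of $\mathrm{supp}(P)$ this forces $\tilde f$ itself to coincide with that $d^2$-concave potential on $\mathrm{supp}(P)$, hence everywhere by the $d^2$-concave envelope structure. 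Therefore $\tilde f = f + c$ for some constant $c$, where $f$ is the designated potential in the statement. Absorbing $c$ into the normalization constants (i.e. redefining $a_n$), every uniformly convergent subsequence of the normalized $\{f_n\}$ has limit exactly $f$. Since the full normalized sequence lies in a compact set and all subsequential limits coincide, $f_n + a_n \to f$ uniformly on $\mathbb{T}^d$, which is the claim.

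\medskip

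\textbf{Main obstacle.} The delicate point is Step~3, the identification of the limit: one must be careful that the uniform limit of $d^2$-concave functions, though automatically $2$-Lipschitz, is being correctly matched to \emph{the} potential $f$ of Theorem~\ref{Corollary:equals}, which requires invoking connectedness of the support to upgrade ``$P$-a.e. equal to a $d^2$-concave optimal potential'' to genuine uniform identification up to a single additive constant. The convergence $\mathcal{T}_2(\alpha_n,\beta_n)\to\mathcal{T}_2(P,Q)$ in Step~2 is also essential but is routine given that $\mathcal{W}_2$ metrizes weak convergence on the compact space $\mathbb{T}^d$; the Lipschitz bound of Lemma~\ref{Lemma:lip} does all the heavy lifting for equicontinuity.
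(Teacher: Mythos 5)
Your proposal is correct and follows essentially the same route as the paper: normalize the potentials, invoke Lemma~\ref{Lemma:lip} for equi-Lipschitz bounds, apply Arzel\`a--Ascoli on the compact torus, and identify the (sub)sequential limit via the uniqueness of Theorem~\ref{Corollary:equals}. The two small deviations from the paper are presentational rather than structural: the paper pins down the additive constant by normalizing at a sequence $\bar p_n\to\bar p$ with $\bar p_n\in R_{\alpha_n}$ (constructed via Portmanteau and extended by Kirszbraun), whereas you normalize by mean (or at a fixed point) and recover the constant afterwards from $\int f\,dP$; and you spell out the stability step $\mathcal T_2(\alpha_n,\beta_n)\to\mathcal T_2(P,Q)$ together with the ``uniform convergence beats weak convergence'' estimate to show the limit is a dual optimizer, a step the paper leaves implicit. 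One very minor remark: your concern in Step~3 that the uniform limit might fail to be $d^2$-concave is unnecessary, since the $d^2$-transform is $1$-Lipschitz for the sup norm and therefore uniform limits of $d^2$-concave functions are automatically $d^2$-concave; your workaround is nonetheless sound.
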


\subsection{Asymptotic normality}\label{section:normality}
This section is devoted a proof of a Central Limit Theorem (CLT) for the fluctuations of the empirical optimal transport cost.  Recall that the previous section proves that, under certain regularity assumptions, there exists a unique optimal transport potential from $P$ to $Q$.  Let $f_p$ be such a potential. We will use Efron-Stein's inequality to derive that
\begin{align*}
    \sqrt{n}\left(\mathcal{T}_p(P_n,Q)- \mathbb{E}\mathcal{T}_p(P_n,Q)\right)\stackrel{w}{\longrightarrow} N(0, \sigma^2_p(P,Q)),
\end{align*}
with
\begin{equation}\label{sigma}
\sigma^2_p(P,Q)=\operatorname{Var}(f_p(X)). 
\end{equation}
Then, we will see that the same holds in the two sample case.
The idea is not new: it has already been used with the same goal in \cite{DelBarrio2019} for the quadratic cost in $\R^d$, and in its extension to general costs in \cite{Alberto}. Moreover,  when using regularized optimal transport,  \cite{Mena2019StatisticalBF} showed that the same technique can be applied. A similar result, but using the idea in \cite{delbarrio_semidiscreto} of differentiating the supremum in the functional sense by applying the general result of \cite{carcamo}, yields also a CLT on the torus for $p\geq 2$, see \cite{Munk_2022}.

\begin{Theorem}\label{Theo:centrallim2}
Let $P,Q\in \mathcal{P}(\mathbb{T}^d)$  be probability measures with connected supports such that their associated periodic measures satisfy $\mu_P, \mu_Q\ll \ell_d$ with negligible boundary. Then, for any $p>1$, we have
\begin{align*}
    \sqrt{n}\left(\mathcal{T}_p(P_n,Q)- \mathbb{E}\mathcal{T}_p(P_n,Q)\right)\stackrel{w}{\longrightarrow} N(0, \sigma_p^2(P,Q)),
\end{align*}
and, if $m=m(n)$ satisfies that $m\longrightarrow+\infty$ and $\frac{n}{n+m} \rightarrow \lambda \in (0,1)$ as $n\rightarrow \infty$,
\begin{equation*}\label{tcl}
\sqrt{\textstyle \frac{nm}{n+m}}\left(\mathcal{T}_p(P_n, Q_m)-\mathbb{E}\mathcal{T}_p(P_n, Q_m)\right)\stackrel{w}{\longrightarrow} N\left(0,(1-\lambda)\sigma^2_p(P,Q)+\lambda\sigma^2_p(Q,P)\right),
\end{equation*}
where $\sigma^2_p(P,Q)$ and  $\sigma^2_p(Q,P)$ are defined in \eqref{sigma} and satisfy
\begin{equation}\label{variance_tcl}
\sqrt{\textstyle \frac{nm}{n+m}}\mathrm{Var}(\mathcal{T}_p(P_n,Q_m))\longrightarrow (1-\lambda)\sigma^2_p(P,Q)+\lambda\sigma^2_p(Q,P).
\end{equation}
\end{Theorem}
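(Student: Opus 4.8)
The plan is to follow the Efron–Stein route of \cite{DelBarrio2019}, which has been set up by all the preceding results. The key observation is that, by Theorem~\ref{Corollary:equals}, for probabilities $P,Q$ with connected support and $\mu_P,\mu_Q\ll\ell_d$, there is a unique (up to additive constant) optimal transport potential $\varphi$ from $P$ to $Q$ for the cost $d^2$, and by Lemma~\ref{Lemma:lip} it is $2$-Lipschitz with respect to $d$; in particular $\varphi$ is bounded on the compact space $\mathbb{T}^d$, so $\operatorname{Var}(\varphi(X))<\infty$ and $\sigma^2(P,Q)$ is well defined. First I would establish the \emph{linearization} of the empirical transport cost: writing $\psi$ for the $d^2$-conjugate of $\varphi$, the dual formulation \eqref{dual} gives $\mathcal{T}_2(P_n,Q)\ge \int\varphi\,dP_n+\int\psi\,dQ$ with equality when $P_n=P$, hence
\begin{equation*}
\mathcal{T}_2(P_n,Q)-\mathbb{E}\,\mathcal{T}_2(P_n,Q) \;=\; \int \varphi\,d(P_n-P) \;+\; R_n,
\end{equation*}
where the remainder $R_n$ collects the gap between the true cost and its first-order approximant together with the centering of that gap. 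The first term is a centered i.i.d.\ average, so $\sqrt n\int\varphi\,d(P_n-P)\xrightarrow{w}N(0,\operatorname{Var}(\varphi(X)))$ by the ordinary CLT. The heart of the matter is to show $\sqrt n\,R_n\to 0$ in probability.

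The control of $R_n$ is where Efron–Stein enters, exactly as in \cite{DelBarrio2019}. Define $\delta_n:=\mathcal{T}_2(P_n,Q)-\int\varphi\,dP_n-\int\psi\,dQ\ge 0$; one shows $\mathbb{E}\,\delta_n=o(n^{-1/2})$ and, more importantly, that $\operatorname{Var}(\sqrt n\,\delta_n)\to 0$, so that $\sqrt n(\delta_n-\mathbb{E}\delta_n)\to 0$ in $L^2$. For the variance bound one applies the Efron–Stein inequality to the function $(X_1,\dots,X_n)\mapsto \mathcal{T}_2(P_n,Q)$: replacing one sample point changes the empirical measure by $\frac1n(\delta_{X_i'}-\delta_{X_i})$, and using the dual bound together with the $2$-Lipschitz potentials for the perturbed problem one gets that each jump is $O(1/n)$ with the right summable structure, giving $\operatorname{Var}(\mathcal{T}_2(P_n,Q))=O(1/n)$; combined with the same estimate for the linear term, which also has variance $O(1/n)$, this pins down $\operatorname{Var}(\sqrt n\,\delta_n)\to 0$ once one knows the potentials are stable, which is precisely the content of Theorem~\ref{Theoremeq} (weak convergence $P_n\xrightarrow{w}P$ a.s.\ forces $f_n+a_n\to\varphi$ uniformly, hence $\int f_n\,dP_n\to\int\varphi\,dP$ and the quadratic fluctuation of the potential term vanishes). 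This yields the one-sample CLT with variance $\sigma^2(P,Q)=\operatorname{Var}(\varphi(X))$.

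For the two-sample statement I would run the same linearization on both arguments: with $\varphi$ the potential from $P$ to $Q$ and $\bar\varphi$ the potential from $Q$ to $P$ (equivalently the conjugate),
\begin{equation*}
\mathcal{T}_2(P_n,Q_m)-\mathbb{E}\,\mathcal{T}_2(P_n,Q_m)\;=\;\int\varphi\,d(P_n-P)+\int\bar\varphi\,d(Q_m-Q)+R_{n,m},
\end{equation*}
with $\sqrt{\tfrac{nm}{n+m}}\,R_{n,m}\to 0$ by the joint Efron–Stein argument (now perturbing an $X$-coordinate or a $Y$-coordinate separately). The two leading terms are independent, so $\sqrt{\tfrac{nm}{n+m}}$ times their sum is asymptotically $N\!\big(0,(1-\lambda)\operatorname{Var}(\varphi(X))+\lambda\operatorname{Var}(\bar\varphi(Y))\big)$ by Slutsky, using $\tfrac{m}{n+m}\to 1-\lambda$ and $\tfrac{n}{n+m}\to\lambda$. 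The variance convergence \eqref{variance_tcl} follows from the same decomposition: the linear part contributes exactly $(1-\lambda)\sigma^2(P,Q)+\lambda\sigma^2(Q,P)$ in the limit after scaling, and the Efron–Stein bounds show the cross terms and the remainder contribute $o(1)$ to the scaled variance.

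The main obstacle is the remainder control, i.e.\ proving $\operatorname{Var}(\mathcal{T}_2(P_n,Q))=o(1/\sqrt n)\cdot\sqrt n=o(1)$ after scaling — more precisely that the Efron–Stein sum of squared jumps, each of size $O(1/n)$, is $o(1/n)\cdot n = $ genuinely $o(n^{-1})$ rather than merely $O(n^{-1})$; this is where one cannot be cavalier, since a bare $O(1/n)$ variance only gives tightness of $\sqrt n\,\delta_n$, not its vanishing. The resolution, following \cite{DelBarrio2019}, is that the leading jump contributions cancel against those of the linear term $\int\varphi\,dP_n$, so it is the \emph{difference} $\delta_n$ whose jumps must be estimated, and these are $o(1/n)$ thanks to the uniform stability of the optimal potentials under weak perturbation — which is exactly why Theorem~\ref{Theoremeq} (and behind it Theorem~\ref{Corollary:equals}, whose hypotheses explain the ``negligible boundary'' assumption ensuring connectedness and regularity of the supports propagates to the empirical problems) is the decisive input. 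Everything else is a transcription of the $\R^d$ argument, the flat torus only simplifying matters through the compactness of $\mathbb{T}^d$, which makes all moment and boundedness conditions automatic.
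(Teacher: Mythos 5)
Your proposal is essentially the same as the paper's proof: linearize $\mathcal{T}_2(P_n,Q)$ via the dual potential $\varphi$, apply Efron–Stein to the remainder $R_n=\mathcal{T}_2(P_n,Q)-\int\varphi\,dP_n$ (your $\delta_n$ differs only by the constant $\int\psi\,dQ$), and use the uniform stability of potentials from Theorem~\ref{Theoremeq} together with compactness of $\mathbb{T}^d$ to show the jumps of $R_n$ are $o(1/n)$, hence $\sqrt n\,(R_n-\mathbb{E}R_n)\to 0$ in probability. You correctly identify the delicate point (a bare $O(1/n)$ Efron--Stein bound on $\mathcal{T}_2$ itself is not enough; the cancellation against the linear term is what gives the required $o(1/n)$), which is precisely the mechanism in the paper's proof.
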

It is clear that the limit of Theorem~\ref{Theo:centrallim2} degenerates to $0$ when $P=Q$. Suppose now that $P\neq Q$ are satisfying the assumption of Theorem~\ref{Theo:centrallim2}. The limit, in the one-sample case, is degenerate if and only if $\operatorname{Var}(f_p(X))=0$. Since the optimal transport potentials are unique up to additive constants, see Theorem~\ref{Corollary:equals}, we can suppose that $E(f_p(X))=0$. Thus, the degeneracy is equivalent to $E(f_p(X)^2)=0$, hence $f_p=0$ $P$-a.s. and the same holds for  $f_p^{d^p}$. This implies, in particular, that $\mathcal{W}_p(P,Q)=0$ which occurs only if $P=Q$.

Our initial motivation to prove Theorem \ref{Theo:centrallim2} was to find an asymptotic distribution of $\mathcal{T}_p(P_n,Q_m)$ allowing the definition of a two-sample goodness-of-fit test. Even for measures supported on the real line, the only asymptotic results account for the case $P\neq Q$, providing the asymptotic behaviour of the Wasserstein statistic under the alternative hypothesis. The idea of switching $H_0$ and $H_1$ and testing for similarities has been studied in several previous works, all considering measures supported on $\mathbb{R}$. Gaussian deviations from the true distance $\mathcal{T}_2(P,Q)$ are proved in \cite{Gordaliza2019}, which allows testing of $\mathcal{T}_2(P,Q)\geq \Delta_0$, for a given threshold $\Delta_0$. In the same way, the earlier work \cite{Freitag2007} introduced such an asymptotic test for assessing similarities based on the trimmed Wasserstein distance, allowing sample dependency. 

Unfortunately, the same strategy can not be applied in our case, as the derived CLT for measures supported on $\mathbb{T}^2$ (Theorem~\ref{Theo:centrallim2}) only states Gaussian deviations from the mean. Indeed, if we use (\ref{variance_tcl}), we could consider the statistic
\begin{equation}\label{asymptotic_statistic}
\frac{\mathcal{T}_p(P_n,Q_m)-\mathbb{E}\mathcal{T}_p(P_n,Q_m)}{\sqrt{\mathrm{Var}(\mathcal{T}_p(P_n,Q_m))}}\overset{w}{\underset{P\neq Q}{\longrightarrow}} N(0,1),
\end{equation}
where, in practice, the variance and expectation could be estimated by bootstrapping the given samples (as long as bootstrap consistency is ensured). \added{The recent works of \cite{Hundrieser2022AUA} and \cite{manole2021plugin} show that, in small dimension --$d=2,3$ and at most $4$--, the value $\mathbb{E}\mathcal{T}_p(P_n,Q_m)$ can be substituted by the population $\mathcal{T}_p(P,Q)$. That gives rise to 
\begin{equation}\label{asymptotic_statistic_correctCent}
\frac{\mathcal{T}_p(P_n,Q_m)-\mathcal{T}_p(P,Q)}{\sqrt{\mathrm{Var}(\mathcal{T}_p(P_n,Q_m))}}\overset{w}{\underset{P\neq Q}{\longrightarrow}} N(0,1),
\end{equation}
see \cite[Example 5.7]{Hundrieser2022AUA} for general $p$ or \cite[Corollary 8]{manole2021plugin} for $p=2$.
However,  for dimension $d>4$ and $p=2$, this substitution is no longer valid \cite[Proof of Proposition 21]{Manole2021SharpCR}. }

When $P\neq Q$, the statistics in \eqref{asymptotic_statistic}, \eqref{asymptotic_statistic_correctCent} converge in law to a standard Gaussian distribution. This is illustrated in Figure \ref{fig:tcl}. However, one would expect the statistic to be stochastically larger under $P\neq Q$ than under $P=Q$, allowing the distinction of the null and the alternative hypotheses. Nevertheless, due to the aforementioned degeneracy of Theorem \ref{Theo:centrallim2} when $P=Q$, this condition fails to be satisfied and no asymptotic test can be implemented from this result. Further discussion about this issue can be found in Section \ref{sec:discussion}. 
Therefore, the rest of this paper is devoted to alternative approaches to define suitable two-sample goodness-of-fit tests for measures supported on $\mathbb{T}^2$.

\section{Two-sample goodness-of-fit tests}\label{section:test}

Let us first formulate the problem. Denote by $(X_1,\ldots,X_n)$ and $(Y_1,\ldots,Y_m)$ two independent and identically distributed random samples of laws $P,Q\in\mathcal{P}(\mathbb{T}^2)$ respectively, and by $P_n$, $Q_m$ their corresponding empirical probability measures. We aim to test
\begin{equation}\label{null_torus}
H_0: P=Q\qquad\textrm{against}\qquad H_1: P\neq Q
\end{equation}
via the the definition of a statistic $T_{nm}=T(P_n,Q_m)$, representing an estimate of discrepancy between $P_n$ and $Q_m$, together with the critical region
\begin{equation}
\label{eq:crit-region}
R=\lbrace (x_1,\ldots,x_n;y_1,\ldots,y_m)\,:\, T_{nm}\geq c_{nm}(\alpha)\rbrace,
\end{equation}
where $x_i$ (resp. $y_j$) denotes a realization of $X_i$ (resp. $Y_j$) for $i=1,\ldots,n$ (resp. $j=1,\ldots,m$). The critical value $c_{nm}(\alpha)$ in \eqref{eq:crit-region} is given for a fixed significance level $\alpha$ by
\begin{equation}
c_{nm}(\alpha)=\inf\lbrace t>0\,:\, F_{nm}(t)\geq 1-\alpha\rbrace,
\end{equation}
where $F_{nm}$ is the distribution function of the statistic $T_{nm}$ under $H_0$. We are therefore considering the test
\begin{equation}\label{ideal_test}
\pi_{nm}=\left\{ \begin{array}{lcc}
             1 &  \textrm{if} & \quad T_{nm}\geq c_{nm}(\alpha)   \\
             0 &   \textrm{otherwise}
             \end{array}
   \right.
\end{equation}
Equivalently, a $p$-value for this test is $p_{nm} = 1- F_{nm}(T_{nm})$.
Ideally, we would like $T_{nm}$ to be $\mathcal{T}_p(P_n,Q_m)$. However, knowing the distribution of the latter statistic under $H_0$ remains an open problem. The one-sample case in $\mathbb{R}^d$ has recently been addressed in \cite{Hallin2021}, but approaches for two-sample testing in arbitrary dimension, and for measures on more general spaces, have not already been proposed to the best of our knowledge. The lack of solutions may be explained by the intrinsic difficulty of characterizing the distribution of $\mathcal{T}_p(P_n,Q_m)$ when $P=Q$ especially when the dimension is larger than one. In the next subsections, we propose two alternative approaches to define \eqref{ideal_test}, both based on the $2$-Wasserstein distance, that allow two-sample goodness-of-fit testing for measures on $\mathbb{T}^2$.

\subsection{Geodesic projections into \texorpdfstring{$\mathbb{R}/\mathbb{Z}$}{the circle}} \label{sec:test_geodesics}

Our first approach for testing the equality of two measures $P$, $Q$ on $\mathbb{R}^2/\mathbb{Z}^2$ is to test the equality of their geodesic projections. This bypasses the dimension problem and allows the implementation of testing techniques based on Wasserstein distance for one-dimensional spaces. Geodesics on $\mathbb{T}^2$ are the images by the canonical projection $\tau$ of straight lines on $\mathbb{R}^2$ \cite{boothby}. Lines with irrational slope map to geodesics which are dense on $\mathbb{T}^2$, and only lines with rational slope map to \textit{closed} geodesics on the torus, which are closed spirals isomorphic to $\mathbb{R}/\mathbb{Z}$ (see \cite[Figure VII.10]{boothby} for an illustration).

The strategy is to project $P_n$ and $Q_m$ into $N_g$ closed geodesics, and to test the equality of each pair of projected measures, which will be supported on $\mathbb{R}/\mathbb{Z}$. These geodesics can be chosen a priori by the practitioner, or sampled from the set of all closed geodesics on $\mathbb{T}^2$. We propose a sampling method in Appendix \ref{appendix:geodesic_sampling}. This method prioritizes simpler geodesics (that is, with a smaller number of revolutions over the torus) in order to ease computational implementation. The algorithm we used to project samples on $\mathbb{T}^2$ to a given geodesic is described in Appendix \ref{appendix:geodesic_projection}. To avoid repetition of the same test, and to ensure independence between the computed $p$-values, we require all the $N_g$ geodesics to be different.

In this section, we propose a two-sample Wasserstein test to assess the equality of two measures supported on the circle, and state how to combine the resulting $N_g$-tuple of $p$-values into a global $p$-value for the bi-dimensional problem. From now on, to simplify notation, we will denote by $\mathcal{T}_2$ any squared Wasserstein distance, the ground space being inferred from the corresponding measures.

\subsubsection{Two-sample goodness-of-fit test on \texorpdfstring{$\mathbb{R}/\mathbb{Z}$}{the circle}}\label{sec:circle_test}

Optimal Transport on the circle has been recently studied in detail in \cite{hundrieser2021}, where the limit laws of the one and two-sample empirical Wasserstein distance for measures on $\mathbb{R}/\mathbb{Z}$ are derived. However, the considered statistics are not distribution-free, so that only one-sample goodness-of-fit tests can be derived from these results. Still, the authors of \cite{hundrieser2021} also propose a $b$-out-of-$n$ bootstrap approach, for $b=o(n)$, to define a two-sample goodness-of-fit test. Unfortunately, type I error fails to be controlled since the bootstrapped $p$-value under the null hypothesis is (substantially) stochastically smaller than a uniform random variable. This can be observed by simple numerical experiments based on the implementation proposed by \cite{hundrieser2021}, for example by comparing two equally-sized samples from a Uniform distribution. We believe that this is due to a lack of consistency of the two-sample bootstrap for the proposed statistic. In order to bypass this issue, we now propose a convenient alternative approach based on a distribution-free two-sample statistic.

Let $P^c,Q^c\in\mathcal{P}(\mathbb{R}/\mathbb{Z})$ and $P^c_n,Q^c_m$ be their corresponding empirical probability measures. We aim to test
\begin{equation*}
    H_0:P^c=Q^c\qquad\textrm{against}\qquad H_1:P^c\neq Q^c.
\end{equation*}
\added{If $\mathbb{R}/\mathbb{Z}$ is parameterized by the set $[0,1)$ with the geodesic distance 
\begin{equation*}
    d_{\mathbb{R}/\mathbb{Z}}(x,y) = \min\lbrace |x-y|, 1-|x-y|\rbrace,
\end{equation*}
the cumulative distribution functions of $P^c, Q^c$, denoted as $F$, $G$ respectively, can be defined as in \cite{hundrieser2021} as
\begin{equation}
    F(t)=P^c([0,t]),\quad G(t)=Q^c([0,t])\qquad\forall\,t\in[0,1).
\end{equation}
}Then, we can write
\begin{equation}\label{w_circle}
    \mathcal{T}_2(P^c,Q^c)=\underset{\alpha\in\mathbb{R}}{\inf}\int_0^1\left(F^{-1}(t)-(G-\alpha)^{-1}(t)\right)^2\,dt,
\end{equation}
where the pseudo-inverse is defined as $H^{-1}(s)=\inf\lbrace t\,:\,H(t)>s\rbrace,$ for any distribution function $H$. The formulation \eqref{w_circle} was first proved in \cite{Rabin2009} for discrete measures, and extended to arbitrary measures in \cite{delon_circle}. It shows how the Optimal Transport problem on the circle reduces to the same problem on $[0,1)\subset\mathbb{R}$ if both measures are relocated on the real line choosing as origin the minimizing element $\alpha$. This is well illustrated in \cite{hundrieser2021}. We first remark that if one of the two measures is the uniform law on $\mathbb{R}/\mathbb{Z}$, the infimum on \eqref{w_circle} has an explicit formulation.

\begin{Lemma}\label{lemma_uniform} Let $P^c\in\mathcal{P}(\mathbb{R}/\mathbb{Z})$, and $F$ be its cumulative distribution function. Let $U$ be the uniform distribution on $\mathbb{R}/\mathbb{Z}$. Then,
\begin{equation*}
    \mathcal{T}_2(P^c,U)=\int_0^1\left(F^{-1}(t)-t-\alpha_0(F)\right)^2\,dt,
\end{equation*}
where the optimal origin is given by
\begin{equation*}
    \alpha_0(F)=\int_0^1(F^{-1}(t)-t)\,dt.
\end{equation*}
\end{Lemma}
If we replace $P^c$ and $F$ by their empirical counterparts, $P_n^c$ and $F_n$, Lemma \ref{lemma_uniform} allows the definition of the statistic $\mathcal{T}_2(P^c_n,U)$, which is distribution-free when $P^c=U$. 

\begin{Lemma}\label{lemma:free_null_distribution} Let $P^c\in\mathcal{P}(\mathbb{R}/\mathbb{Z})$, $P^c_n$ be its empirical probability measure, and $U$ be the uniform distribution on $\mathbb{R}/\mathbb{Z}$. Then, if $P^c=U$,

\begin{equation*}
    n\,\mathcal{T}_2(P^c_n,U)\overset{w}{\underset{n}{\longrightarrow}}\int_0^1\mathbb{B}(t)^2\,dt-\left(\int_0^1 \mathbb{B}(t)\,dt \right)^2,
\end{equation*}
where $\mathbb{B}$ is a standard Brownian bridge, and the weak convergence is understood as convergence of probability measures on the space of right-continuous functions with left limits.
\end{Lemma}

Lemma \ref{lemma:free_null_distribution} can be used to define of a one-sample goodness-of-fit uniformity test, based on the squared Wasserstein distance on the circle. This would complement the work in \cite{hundrieser2021}, where such a test was introduced for the $1$-Wasserstein distance. As our aim here is to define a two-sample test, we adapt the idea of \cite{Ramdas2015} to compare two measures on the circle, by considering the $2$-Wasserstein distance between $G_m^{-1}(F_n)$ and the uniform distribution. We can therefore consider the statistic
\begin{gather}
    T^c_{nm}=\frac{nm}{n+m}\mathcal{T}_2(G_m\#P^c_n,U)=\label{circle_statistic}\\\frac{nm}{n+m}\int_0^1 \left(G_m(F_n^{-1}(t))-t-\alpha_0(F_n^{-1}( G_m))\right)^2dt,\nonumber
\end{gather}
which is also distribution-free when $P^c=Q^c$. The following result is the counterpart of Lemma \ref{lemma:free_null_distribution}.

\begin{Proposition}\label{prop:free_circle_statistic} Let $P^c,Q^c\in\mathcal{P}(\mathbb{R}/\mathbb{Z})$, \added{having continuous and strictly increasing cumulative distribution functions}. Let $P^c_n,Q^c_m$ be their corresponding empirical probability measures, and $F_n, G_m$ be their empirical cumulative distribution functions. If $\frac{n}{m}\rightarrow\lambda$ when $n,m\rightarrow\infty$ for some $\lambda\in[0,\infty)$ then, under $P^c=Q^c$, it holds that
\begin{equation*}
    T_{nm}^c=\frac{nm}{n+m}\mathcal{T}_2(G_m\#P^c_n,U)\overset{w}{\underset{n,m}{\longrightarrow}}\int_0^1\mathbb{B}(t)^2\,dt-\left(\int_0^1 \mathbb{B}(t)\,dt \right)^2.
\end{equation*}
\end{Proposition}
Consequently, with the notation of the beginning of Section \ref{section:test}, we propose the test
\begin{equation}\label{marginal_test}
\pi_{nm}^c=\left\{ \begin{array}{lcc}
             1 &  \textrm{if} & \quad T^c_{nm}\geq c^c_{nm}(\alpha) \\
             0 &   \textrm{otherwise}
             \end{array}
   \right.
\end{equation}
where the critical value $c_{nm}^c(\alpha)$ is  given by
\begin{equation*}
    c_{nm}^c(\alpha)=\inf\left\lbrace t>0\,:\, F_{nm}^c(t)\geq 1-\alpha\right\rbrace\,,
\end{equation*}
with $F_{nm}^c$ denoting the distribution function of $T_{nm}^c$ under $H_0$. 
Equivalently, a $p$-value for this test is $p^c_{nm} = 1- F^c_{nm}(T^c_{nm})$.
Following Proposition \ref{prop:free_circle_statistic}, the critical value or, equivalently, the $p$-value for a given sample, can be approximated with arbitrary precision using a Monte Carlo algorithm. The following result guarantees the consistency of \eqref{marginal_test}.

\begin{Proposition}[Consistency]\label{prop:consistency_marg} Let $P^c,Q^c\in\mathcal{P}(\mathbb{R}/\mathbb{Z})$ \added{having continuous and strictly increasing cumulative distribution functions}. If $P^c\neq Q^c$, it holds
\begin{equation*}
    \underset{n,m\rightarrow\infty}{\lim}\mathbb{P}\left(\pi_{nm}^c=1\right)=1\quad\textrm{for any }\alpha>0.
\end{equation*}
\end{Proposition}


\subsubsection{Combining a $N_g$-tuple of tests on \texorpdfstring{$\mathbb{R}/\mathbb{Z}$}{the circle}}\label{sec:ng_geodesics}

Consider the problem of testing the equality of $N_g$ pairs of projections of $P_n$ and $Q_m$ into $N_g$ different closed geodesics. Instead of a single statistic, we now have a sample $(T_{nm,1}^c,\ldots,T_{nm,N_g}^c)$ of statistics which, under the null hypothesis, are identically distributed as $T_{nm}^c$ (by Proposition \ref{prop:free_circle_statistic}). Equivalently, one can think of a sample of $p$-values
$(p_1,\ldots,p_{N_g})$ which, following \eqref{marginal_test}, are given by 
\begin{equation}
  p_i=1-F_{nm}^c(T_{nm,i}^c)\quad i=1,\ldots,N_g.
\end{equation}
\added{These individual $p$-values can be aggregated as follows:
\begin{equation}\label{pvalue_twomarg}
    p^{N_g} = N_g\,\min_{i=1}^{N_g} p_i.
\end{equation}
This aggregation is akin to the Bonferroni correction for Family Wise Error Rate (FWER) control in multiple testing \cite{Bonferroni}. As such, $p^{N_g}$ defined in \eqref{pvalue_twomarg} is a valid $p$-value for the two-dimensional test, regardless of the possible dependencies between the $N_g$ individual $p$-values. This implies that the two-dimensional test}
\begin{equation}\label{torus_marg_test}\tag{$N_g$-geod}
\pi_{nm,N_g}^{g}=\left\{ \begin{array}{lcc}
             1 &  \textrm{if} & p^{N_g}\leq \alpha\\
             0 &   \textrm{otherwise}
             \end{array}\right.
\end{equation}
\added{controls the type I error for any $\alpha>0$ (see Appendix \ref{proofs_3} for a proof)}. Regarding consistency under fixed alternatives, 
by construction, \eqref{torus_marg_test} will fail to detect differences between two measures on $\mathbb{T}^2$ whose projected distributions are identical for all the $N_g$ geodesics considered. Therefore, $\pi_{nm,N_g}^{g}$ will not be consistent under such alternatives, which, are arguably very unlikely in practice if $N_g$ is large enough. Otherwise, consistency is guaranteed. 

\begin{Proposition}[Consistency]\label{prop:consistency_twomarg}Let $P,Q\in\mathcal{P}(\mathbb{T}^2)$ \added{such that $\mu_P,\mu_Q\ll\ell_2$} and $P^c_i$ (resp. $Q^c_i$), $i=1,\ldots,N_g$, be the circular projected distributions of $P$ (resp. $Q)$ to $N_g$ closed geodesics of $\mathbb{T}^2$. If $P^c_i\neq Q^c_i$ for at least one $i\in\lbrace 1,\ldots,N_g\rbrace$, it holds
\begin{equation*}
    \underset{n,m\rightarrow\infty}{\lim}\mathbb{P}\left( \pi_{nm,N_g}^{g}=1\right)\quad\textrm{for any }\alpha>0.
\end{equation*}
\end{Proposition}

\added{
\begin{Remark}\label{remark_abscont} The assumption in Proposition~\ref{prop:free_circle_statistic} that the projected measure $P^c\in\mathcal{P}(\mathbb{R}/\mathbb{Z})$ has continuous and strictly increasing cumulative distribution function is satisfied if the underlying measure $P\in\mathcal{P}(\mathbb{T}^2)$ satisfies $\mu_P\ll\ell_2$. See Appendix \ref{proofs_3} for a proof.
\end{Remark}}

\added{The time complexity of \eqref{torus_marg_test} is $\mathcal{O}(n+m)$. Indeed,  $n+m$ operations are needed to compute $G_m(F_n^{-1}(t))$ and $F_n^{-1}(G_m(t))$ for a given $t$. Therefore, computing the test statistic \eqref{circle_statistic} can be done in $\mathcal{O}(n+m)$ operations, where the complexity constant depends on the number of subdivisions of $[0,1]$ set by the numerical integration method chosen to compute \eqref{circle_statistic}. Moreover, the time complexity of the algorithm described in Appendix \ref{appendix:geodesic_sampling} to sample closed geodesics is also $\mathcal{O}(n+m)$ in practice, as a consequence of the distribution from which the geodesics are drawn. This is empirically illustrated in Figure \ref{fig:time}.
}

\subsection{$p$-value upper bounding}\label{sec:upper_bound}

If we set $\mathcal{T}_2(P_n,Q_m)$ as the statistic $T_{nm}$ for the test \eqref{ideal_test}, the p-value for a given sample would be given by
\begin{equation}\label{ideal_pvalue}
\mathbb{P}_{H_0}(\mathcal{T}_2(P_n,Q_m)\geq t_{nm}),
\end{equation}
where $t_{nm}$ denotes the statistic realization. The goal of this section is to find an upper bound for \eqref{ideal_pvalue}, which will itself be a valid $p$-value for \eqref{ideal_test} if it controls type I error (that is, if it remains with probability $1-\alpha$ over a fixed significance level $\alpha$ under $H_0$). We will also require the power of the corresponding test to tend to 1 under fixed alternatives. We start by upper bounding the deviations of the statistic from the mean. Using McDiarmid's inequality \cite{mcdiarmid}, we obtain the following result, which extends to the two-sample case the inequality in \cite[Proposition 20]{weed_bach}, for the quadratic cost.



\begin{Theorem}\label{Theorem::bound} Let $P,Q\in\mathcal{P}(\mathbb{T}^2)$ and $P_n,Q_m$ be two empirical probability measures of laws $P$, $Q$ respectively. Then, for all $t\in\mathbb{R}$, we have
\begin{equation}\label{bound_mean_dev}
\mathbb{P}\left(\mathcal{T}_2(P_n,Q_m)-\mathbb{E}\mathcal{T}_2(P_n,Q_m)>t\right)\leq \exp\left(-\frac{nm}{n+m} 8t^2 \right).
\end{equation}
\end{Theorem}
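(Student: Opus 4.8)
The plan is to apply the bounded-differences (McDiarmid) inequality to the function $(X_1,\dots,X_n,Y_1,\dots,Y_m)\mapsto \mathcal{T}_2(P_n,Q_m)$, viewed as a function of $n+m$ independent random variables taking values in $\mathbb{T}^2$. The key point is to show that changing a single coordinate — say replacing $X_i$ by $X_i'$ while keeping all other sample points fixed — changes the value of $\mathcal{T}_2(P_n,Q_m)$ by at most a controlled amount. Once such a bounded-differences constant $c_i$ is obtained for each coordinate, McDiarmid's inequality gives $P\big(\mathcal{T}_2(P_n,Q_m)-\mathbb{E}\mathcal{T}_2(P_n,Q_m)>t\big)\le \exp\big(-2t^2/\sum_i c_i^2\big)$, and it remains only to plug in the constants and simplify to the stated bound $\exp\big(-\tfrac{nm}{n+m}8t^2\big)$.

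The first step is the stability estimate. Let $P_n$ and $P_n'$ be empirical measures differing only in one atom: $P_n=\tfrac1n\sum_{k}\delta_{X_k}$ and $P_n'$ replaces $\delta_{X_i}$ by $\delta_{X_i'}$. Using the coupling that transports the mass $\tfrac1n$ at $X_i$ to $X_i'$ and leaves everything else fixed, one gets $\mathcal{W}_2^2(P_n,P_n')\le \tfrac1n\, d^2(X_i,X_i')$, and since $\mathbb{T}^2=\mathbb{R}^2/\mathbb{Z}^2$ has diameter $\tfrac{\sqrt2}{2}$ (the farthest two points on the flat unit torus are at distance $\tfrac12$ in each coordinate), we have $d^2(X_i,X_i')\le \tfrac12$, hence $\mathcal{W}_2(P_n,P_n')\le \tfrac1{\sqrt{2n}}$. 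Then by the triangle inequality for $\mathcal{W}_2$,
\begin{equation*}
\big|\mathcal{T}_2(P_n,Q_m)^{1/2}-\mathcal{T}_2(P_n',Q_m)^{1/2}\big| = \big|\mathcal{W}_2(P_n,Q_m)-\mathcal{W}_2(P_n',Q_m)\big|\le \mathcal{W}_2(P_n,P_n')\le \tfrac1{\sqrt{2n}}.
\end{equation*}
Since the diameter bound also gives $\mathcal{W}_2(P_n,Q_m)\le \tfrac1{\sqrt2}$, and $|a^2-b^2|\le |a-b|\,(a+b)$ with $a+b\le \sqrt2$, this upgrades to $\big|\mathcal{T}_2(P_n,Q_m)-\mathcal{T}_2(P_n',Q_m)\big|\le \tfrac1{\sqrt{2n}}\cdot\sqrt2 = \tfrac1n$. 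The symmetric argument for a change in a $Y_j$-coordinate gives the bounded-differences constant $\tfrac1m$ there. So $c_i=\tfrac1n$ for $i=1,\dots,n$ and $c_{n+j}=\tfrac1m$ for $j=1,\dots,m$.

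The final step is bookkeeping: $\sum_i c_i^2 = n\cdot\tfrac1{n^2}+m\cdot\tfrac1{m^2}=\tfrac1n+\tfrac1m=\tfrac{n+m}{nm}$, so McDiarmid yields the exponent $-2t^2/(\tfrac{n+m}{nm}) = -\tfrac{nm}{n+m}\,2t^2$. To reach the sharper constant $8$ in the statement, one sharpens the per-coordinate estimate by a factor of $2$: a more careful use of the diameter (the relevant quantity in the coupling bound is $d^2(X_i,X_i')$, and on $\mathbb{T}^2$ one in fact only needs $\sup d^2 = \tfrac14+\tfrac14 = \tfrac12$, but combined with the exact identity $\mathcal{T}_2(P_n,Q_m)-\mathcal{T}_2(P_n',Q_m) = (\mathcal{W}_2(P_n,Q_m)-\mathcal{W}_2(P_n',Q_m))(\mathcal{W}_2(P_n,Q_m)+\mathcal{W}_2(P_n',Q_m))$ and a tighter bound on the sum of the two Wasserstein distances) gives $c_i=\tfrac1{2n}$, $c_{n+j}=\tfrac1{2m}$, whence $\sum c_i^2=\tfrac14\cdot\tfrac{n+m}{nm}$ and the exponent becomes $-8\tfrac{nm}{n+m}t^2$. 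The main obstacle is getting the constant exactly right: one has to be careful about whether the diameter of the flat torus is $\tfrac{\sqrt2}{2}$ or $\tfrac12$ per coordinate and how the $a^2-b^2$ factorization interacts with it, since a sloppy bound loses the factor $4$ and only gives exponent $2$ instead of $8$. Everything else — existence of optimal plans, the triangle inequality for $\mathcal{W}_2$ on the Polish space $\mathbb{T}^2$, and McDiarmid's inequality itself — is standard and available from the setup in Section~\ref{section_measures}.
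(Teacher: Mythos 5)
Your argument does not work as written, and the error is masked by an arithmetic mistake. You bound the single-coordinate perturbation of $\mathcal{W}_2$ via the triangle inequality, $|\mathcal{W}_2(P_n,Q_m)-\mathcal{W}_2(P_n',Q_m)|\le \mathcal{W}_2(P_n,P_n')\le \tfrac{1}{\sqrt{2n}}$, and then multiply by $\mathcal{W}_2(P_n,Q_m)+\mathcal{W}_2(P_n',Q_m)\le \sqrt2$. But $\tfrac{1}{\sqrt{2n}}\cdot\sqrt{2}=\tfrac{1}{\sqrt n}$, not $\tfrac1n$. With $c_i=\tfrac{1}{\sqrt n}$ you get $\sum_i c_i^2 = 1+1 = 2$, and McDiarmid yields only $\exp(-t^2)$, independent of $n,m$ — completely useless. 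This is not a bookkeeping problem that a ``more careful use of the diameter'' fixes: going through the square root $\mathcal{W}_2$ and back to $\mathcal{T}_2$ via $a^2-b^2=(a-b)(a+b)$ fundamentally loses a factor of $\sqrt n$, because the coupling bound at the $\mathcal{W}_2$ level is only $O(n^{-1/2})$, while the sum $a+b$ is $O(1)$. There is no tighter bound on $a+b$ available in general.

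The paper instead estimates the change in the squared cost $\mathcal{T}_2$ directly, never passing through the square root. Let $\pi$ be an optimal plan for the sample $(x_1,\ldots,x_n,y_1,\ldots,y_m)$; it is feasible for the sample with $x_1$ replaced by $x_1'$, so
\begin{equation*}
\mathcal{T}_2(P_n',Q_m)-\mathcal{T}_2(P_n,Q_m)\le \sum_{j}\bigl(d^2(x_1',y_j)-d^2(x_1,y_j)\bigr)\pi_{1,j}\le \operatorname{diam}(\mathbb{T}^2)^2\sum_j\pi_{1,j}=\frac{1}{2}\cdot\frac{1}{n},
\end{equation*}
since $\sum_j\pi_{1,j}=\tfrac1n$ is the mass of the atom $x_1$ under $P_n$, and $d^2\le\tfrac12$ on $\mathbb{T}^2$. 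Symmetry gives the reverse inequality, so $c_i=\tfrac{1}{2n}$ for the $x$-coordinates and $c_{n+j}=\tfrac{1}{2m}$ for the $y$-coordinates, $\sum c_i^2=\tfrac{n+m}{4nm}$, and McDiarmid gives $\exp\bigl(-2t^2/\sum c_i^2\bigr)=\exp\bigl(-\tfrac{nm}{n+m}8t^2\bigr)$. The key step you are missing is this direct comparison of transport costs with a common feasible plan, which gives a per-coordinate increment of order $\text{diam}^2/n$ rather than $\text{diam}/\sqrt n$.
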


After that, we study the convergence speed of the expectation under the null hypothesis. Using directly the results exposed in \cite{Fournier2016}, only bounds of order
\begin{equation}\label{bound_fournier}
\mathbb{E}\mathcal{T}_2(P_n,Q_m)=O\left(n^{-\frac{1}{2}}+m^{-\frac{1}{2}}\right)
\end{equation}
can be expected. However, the recent work in \cite{ambrosio2021quadratic} shows that the convergence of the mean \eqref{bound_fournier} becomes faster under some regularity assumptions. On the one hand, we require the density of the induced periodic measure $\mu_P$ to be Hölder continuous\footnote{A function $f:\mathbb{R}^n\rightarrow\mathbb{R}$ is said to be locally Hölder continuous in a compact set $X$ for some $\alpha>0$ if, for every $x\in X$, there exists some $\epsilon>0$ such that  $|f(x)-f(y)|\leq C \|x-y\|^{\alpha}$ if $y\in X$ and $\|y-x\|<\epsilon$.} and absolutely continuous w.r.t. the Lebesgue measure $\ell_2$ in $\R^2$. On the other hand, we require the set $\operatorname{supp}(P)$ to be connected and to have $\mathcal{C}^1$ boundary, in the sense that it can be locally parameterized by a $\mathcal{C}^1$ curve.
\begin{assumption}\label{Lip} (1) $P\in \mathcal{P}(\mathbb{T}^2)$ is supported in a connected set with $\mathcal{C}^1$  boundary,  with $\mu_P\ll \ell_2$. (2) Its probability density $p$ is Hölder continuous and bounded from below in its support ($p(x)\geq \lambda>0$ for all $x\in \operatorname{supp}(\mu_P)$).
\end{assumption}

If Assumption \ref{Lip} is satisfied, then from Lemma B.1 and Theorem 6.3. in \cite{ambrosio2021quadratic} we can derive the following asymptotic bound for the two-sample null expectation.

\begin{Lemma}\label{Lemma:twosample_exp} Let $P=Q\in\mathcal{P}(\mathbb{T}^2)$ satisfy Assumption \ref{Lip} and $m=m(n)$ be a sequence such that $m\xrightarrow[n\rightarrow\infty]{}\infty$ and $\frac{n}{m}\rightarrow \lambda\in (0,1)$. Then, we have 
\begin{equation}\label{two_sample_bound}
\underset{n\rightarrow\infty}{\lim\sup}\,\frac{n}{\log(n)}\,\mathbb{E}\mathcal{T}_2(P_n,Q_m)\leq \frac{1}{4\pi}\left(1+\frac{1}{\lambda} \right).
\end{equation}
\end{Lemma}

Note that Assumption \ref{Lip} is not especially restrictive. It is satisfied by any continuously differentiable density, whose connected support can be locally given by the graph of a continuously differentiable function. Examples include bivariate von Mises distributions or uniform distributions in connected smooth sets.

The idea to define the test is to combine Theorem \ref{Theorem::bound} with Lemma \ref{Lemma:twosample_exp} and upper bound \eqref{ideal_pvalue} for sufficiently large sample sizes. If we take the limit for the expectation in \eqref{bound_mean_dev} under the null, we have the following result.

\begin{Proposition}\label{prop:epsilon} Let $P,Q\in\mathcal{P}(\mathbb{T}^2)$ and $P_n,Q_m$ be two empirical probability measures of laws $P$, $Q$ respectively. For all $\varepsilon>0$, there exists $N_\varepsilon\in\mathbb{N}$ such that for all $n,m\geq N_\varepsilon$, we have
\begin{equation}\label{bound_mean_dev_epsilon}
\mathbb{P}_{H_0}\left(\mathcal{T}_2(P_n,Q_m)>t\right)\leq \exp\left(-\frac{nm}{n+m} 8(t-\varepsilon)^2 \right)\,=:\,\xi_{nm,\varepsilon}(t)\qquad\forall\,t>0.
\end{equation}
\end{Proposition}
For a fixed $\varepsilon>0$, the bound \eqref{bound_mean_dev_epsilon} can be used to define a test \eqref{ideal_test} for any $\alpha>0$ as follows:
\begin{equation}\label{test_epsilon}\tag{UB}
    \pi_{nm,\varepsilon}^{ub}=\left\{ \begin{array}{lcc}
             1 &  \textrm{if} & \xi_{nm,\varepsilon}\left(\mathcal{T}_2(P_n,Q_m)\right)\leq \alpha\\
             0 &   \textrm{otherwise}
             \end{array}\right.
\end{equation}
By Proposition \ref{prop:epsilon}, the test \eqref{test_epsilon} will control type I error for all $n,m\geq N_\epsilon$. In practice, the threshold $N_\varepsilon$ depends on the unspecified constant hidden in \eqref{two_sample_bound}, which is dragged from the results in \cite{ambrosio2021quadratic}. The following result shows that, nevertheless, asymptotic consistency at level $\alpha$ of \eqref{test_epsilon} is guaranteed.
\begin{Proposition}[Asymptotic consistency at level $\alpha$]\label{prop:asymptotic_consistency} Let $P,Q\in\mathcal{P}(\mathbb{Z}^2)$. The test \eqref{test_epsilon} is asymptotically of level $\alpha$. If $P=Q$, we have, for any $\varepsilon>0$,
\begin{equation}
   \underset{n,m\rightarrow\infty}{\lim} \mathbb{P}\left(\pi_{nm,\varepsilon}^{ub}=1\right)\leq \alpha\quad\textrm{for any }\alpha>0.
\end{equation}
Under fixed alternatives, \added{the test is consistent if $\mathcal{T}_2(P,Q)>\varepsilon$}:
\begin{equation*}
    \underset{\begin{smallmatrix} n,m \rightarrow \infty \end{smallmatrix}}{\lim}\mathbb{P}\left(\pi_{nm,\varepsilon}^{ub}=1\right)=1\quad\textrm{for any }\alpha>0.
\end{equation*}
\end{Proposition}

\added{The last result ensures asymptotic consistency at level $\alpha$ if the two compared measures are further than $\varepsilon$ in the squared $2$-Wasserstein distance. This can be used to calibrate the sensibility of \eqref{test_epsilon} if the practitioner possesses some prior information about the differences that the test should accept. This would ensure smaller $N_\varepsilon$ without implying a power decrease. For the simulation and case studies presented here, we will set $\varepsilon$ to the machine precision  $\varepsilon_m=2.2\cdot 10^{-16}$ (for a standard double-precision floating-point format). The corresponding $N_\varepsilon$ should be affordable thanks to Lemma \ref{Lemma:twosample_exp}, responsible of the satisfactory power of \eqref{test_epsilon}.} Due to the improved convergence speed of the expectation, we will have sharp bounds \eqref{bound_mean_dev_epsilon} for reasonable sample sizes, allowing the detection of differences for our practical purposes. This is illustrated in Section \ref{section::asymptotic_performance}.

\added{The computational complexity of \eqref{test_epsilon} is given by the numerical algorithm solving the Optimal Transport problem. Here, we used the Fast Network Simplex for Optimal Transport \cite{lemon}, which has $\mathcal{O}((n+m)^2)$ time complexity and $\mathcal{O}((n+m)^2)$ memory cost, due to the cost matrix computation.}

\section{Numerical experiments}\label{simulations}

This section is devoted to assess the performance of the two-sample goodness-of-fit tests \eqref{torus_marg_test} and \eqref{test_epsilon}, and to show how they can be implemented to evaluate differences on protein structure data. In Section \ref{section::small_sample_performance}
and \ref{section::asymptotic_performance}, we evaluate the relative efficiency of both tests, comparing their performance with other methods not based on Optimal Transport. Section \ref{section::proteins} illustrates one possible application to protein structure investigations, by stating statistical evidence of nearest neighbors effects on local protein conformations.

\subsection{Small-sample performance}\label{section::small_sample_performance}

To make an informative analysis of the performance of tests \eqref{torus_marg_test} and \eqref{test_epsilon}, we studied how their power function behaves for alternatives converging to the null hypothesis. We also assessed whether the proposed approach to define a Wasserstein test on the circle contributes to a better power. In particular, we compared the power function of \eqref{torus_marg_test} with variations of the same test. On the one hand, to evaluate whether the choice of an optimal origin to relocate the measures on $[0,1)$ is advantageous, we considered the same statistic \eqref{circle_statistic} but with $\alpha_0$ being random and uniformly chosen in $[0,1]$. It is easy to check that the modified statistic is distribution-free under the null, by proceeding analogously to Proposition \ref{prop:free_circle_statistic}. On the other hand, to study whether the use of Wasserstein distance for the one-dimensional statistic contributes to a better power, we relocated the measures in $[0,1)$ (again after choosing a random origin on $\mathbb{R}/\mathbb{Z})$ and compared them with the well-known Anderson-Darling two-sample statistic. To study the effect of the number $N_g$ of geodesics, we performed the test \eqref{torus_marg_test} for $N_g\in\lbrace 2,3,4,5\rbrace$. We also compared all the previous approaches with the two-dimensional extension of the Kolmogorov-Smirnov two-sample test proposed by Fasano and Franceschini \cite{fasano_franceschini}, defined for measures supported on $\mathbb{R}^2$. This allows the assessment of whether taking into account the geometry of the underlying space contributes to a better performance.

For the small-sample case, we compared samples of size $n=m=50$ drawn from a bivariate von Mises (bvM) distribution \cite{prot_vonmises} of means $\mu=\nu=0.5$, and concentration parameters $\kappa_1,\kappa_2,\kappa_3$ with equally-sized samples drawn from a uniform distribution on $\mathbb{T}^2$. \added{The density of the bvM cosine model is given by
\begin{align*}
    f(\varphi,\psi)=c(\kappa_1,\kappa_2,\kappa_3)(\exp(\kappa_1\cos(\varphi-\mu)+\kappa_2\cos(\phi-\nu)-\\ 
    \kappa_3\cos(\varphi-\mu-\psi+\nu)),
\end{align*}
where the explicit form of the normalization constant $c(\kappa_1,\kappa_2,\kappa_3)$ is stated in \cite{prot_vonmises}.} The null hypothesis corresponds to the case $\kappa_1=\kappa_2=\kappa_3=0$.
For the converging alternatives, we distinguished two scenarios: 
\begin{itemize}
    \item[(a)] No dependence structure: $\kappa_3=0$ and $\kappa_1=\kappa_2\in[0,3]$ as varying parameter.
    \item[(b)] Only dependence structure: $\kappa_1=\kappa_2=0$ and $\kappa_3\in[0,3]$ as varying parameter. Here, the marginal laws are uniform distributions on $[0,1]$ \cite{prot_vonmises}.
\end{itemize}
The rejection probability was estimated as the proportion of rejections at level $\alpha=0.05$ among $5000$ repetitions of each test for a fixed value of the corresponding varying parameter. Results for both scenarios are shown in Figure \ref{fig:small_sample}, where `W-geodesic' stands for the test \eqref{torus_marg_test}, `Naive W-geodesic' for its random origin variation, `AD-geodesic' for the comparison with the Anderson-Darling two-sample statistic, and `Upper bound' for the test \eqref{test_epsilon}.

\begin{figure}[htp]
    \centering
    \includegraphics[scale = 0.24]{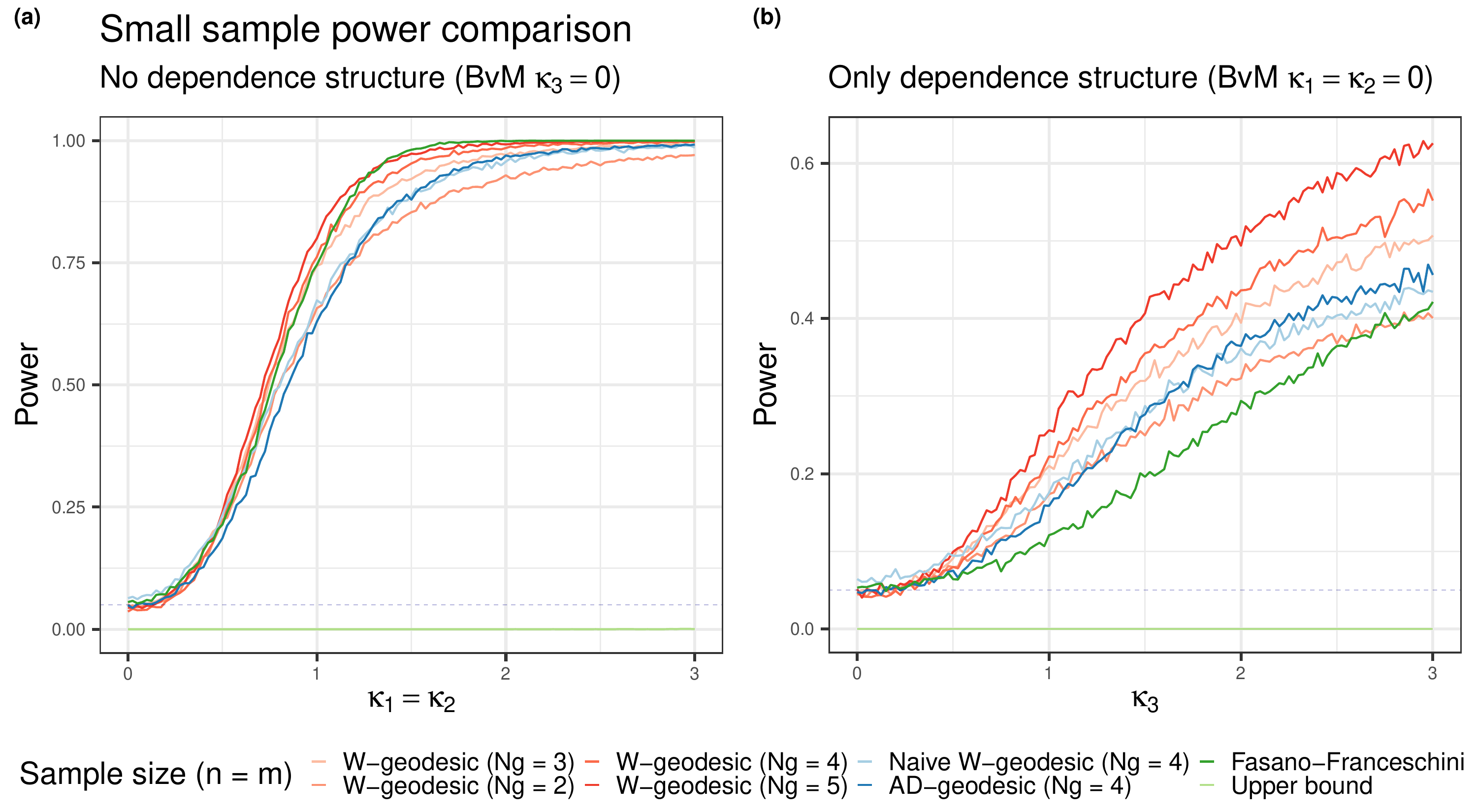}
    \vspace{-5mm}
    \caption{Empirical power of two-sample goodness-of-fit tests for measures supported on $\mathbb{T}^2$, under bivariate von Mises (BvM) alternatives with no dependence structure and different marginal laws (a) and with equal marginal laws and dependence structure (b). The simulated samples had sizes $n=m=50$. The empirical power corresponds to the proportion of rejections at level $\alpha=0.05$ (dashed line) among $5000$ repetitions of the test for fixed concentration parameters.}
    \label{fig:small_sample}
\end{figure}

The first conclusion that we can state after Figure \ref{fig:small_sample} is that \added{the test \eqref{test_epsilon} has zero power for small sample sizes}. This was expected by Proposition \ref{prop:epsilon}, as large values of $n,m$ are required to ensure sharp bounds. However, some interesting conclusions can be extracted regarding the other tests. First, the test \eqref{torus_marg_test} has power $\alpha$ under $H_0$. Indeed, further simulations confirmed that the approach described in Section \ref{sec:ng_geodesics} ensures the uniformity of the combined $p$-value's null distribution. Together with the illustrated consistency of test \eqref{torus_marg_test}, we can observe the considerable gain in power when comparing measures with the Wasserstein statistic \eqref{circle_statistic} by choosing an optimal origin on the circle. The choice of a random origin (`Naive W-geodesic' curve) or the use of techniques that do not rely on Optimal Transport (`AD-geodesic' or Fasano-Franceschini curve) notably reduce the test power, specially when differences are presented on the dependence structure (Figure \ref{fig:small_sample}b). Finally, the choice of the number $N_g$ of geodesics seems to have an effect on power. As one could have expected, increasing the number of geodesic projections improves the test's ability to detect slighter differences. Consequently, the practitioner is entitled to indefinitely increase $N_g$, paying back on computation time (or implementation complexity, if geodesics are randomly chosen, see Appendix \ref{appendix:geodesic_sampling}).

\subsection{Asymptotic performance}\label{section::asymptotic_performance}


This section is devoted to assess the suitability of the upper bound testing technique \eqref{test_epsilon} when large sample sizes are available. Here, we studied the relative efficiency of tests \eqref{test_epsilon} and \eqref{torus_marg_test} for the same converging alternatives as in Section \ref{section::small_sample_performance}, with $n=m\in\lbrace 1000, 1500, 2000\rbrace$. Results are shown in Figure \ref{fig:asymptotic_performance}, where the parameter of interest ($\kappa_1=\kappa_2$ or $\kappa_3$) took values in $\lbrace 0.1, 0.2,\ldots,4\rbrace$, and the empirical power was estimated as the proportion of rejections at level $\alpha=0.05$ among $1000$ repetitions of each test.

\begin{figure}[tp]
    \centering
    \includegraphics[scale = 0.24]{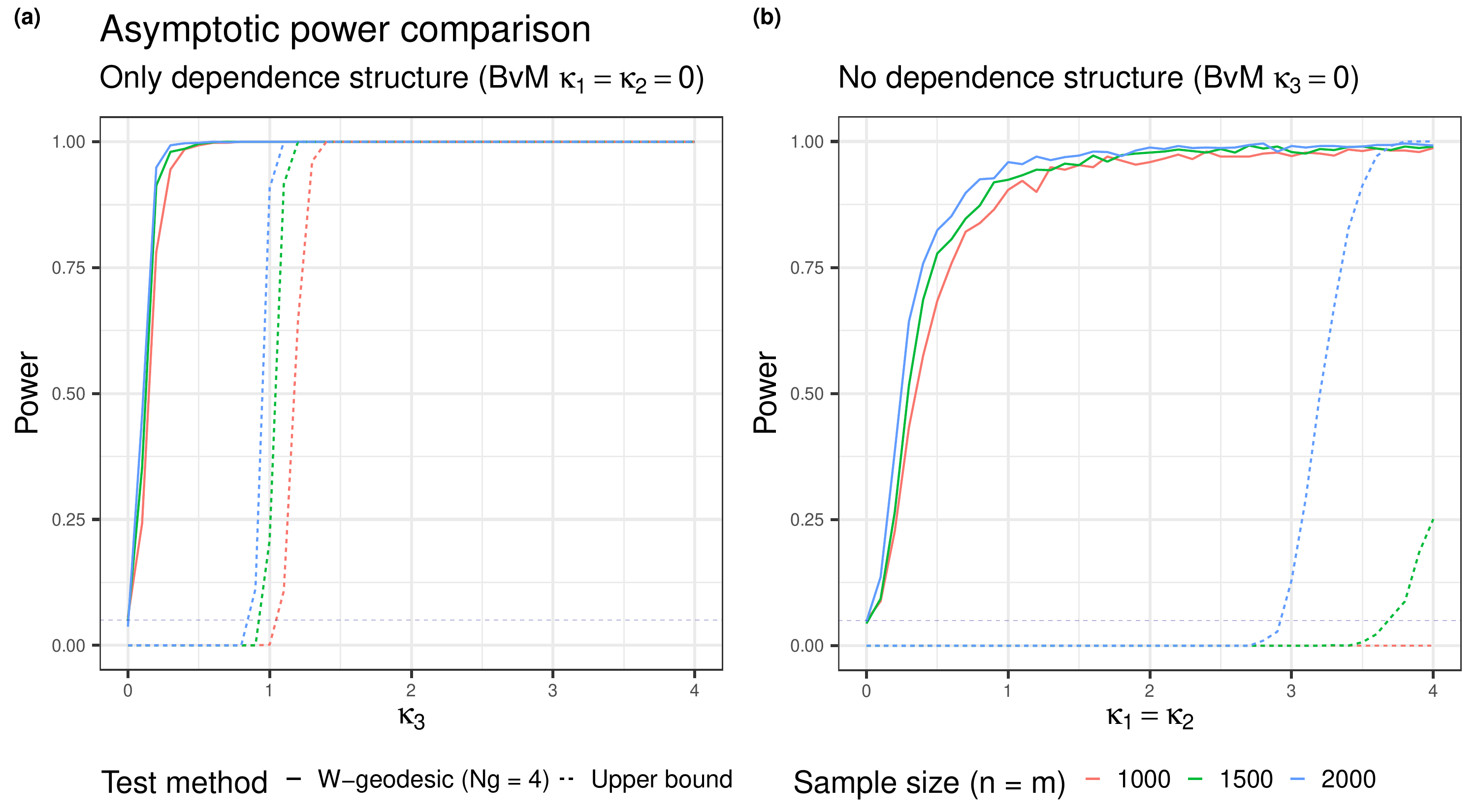}
    \vspace{-5mm}
    \caption{Empirical power of two-sample goodness-of-fit tests for measures supported on $\mathbb{T}^2$, under bivariate von Mises (BvM) alternatives with no dependence structure and different marginal laws (a) and with equal marginal laws and dependence structure (b). The empirical power corresponds to the proportion of rejections at level $\alpha=0.05$ (dashed line) among $1000$ repetitions of each test for fixed concentration parameters.}
    \label{fig:asymptotic_performance}
\end{figure}

Figure \ref{fig:asymptotic_performance} shows that the test \eqref{test_epsilon} is powerful when sample sizes are large enough. As its corresponding $p$-value has been defined as an upper bound of the actual $p$-value \eqref{ideal_pvalue}, it will be quite a conservative test and, therefore, relatively less efficient than \eqref{torus_marg_test}. This is illustrated in both panels. In any case, the test \eqref{test_epsilon} can be useful in practice. Besides the detection of big differences, the practitioner may be interested in the acceptance of small and controlled discrepancies between samples, which may be due, for instance, to experimental inaccuracies. In scenarios where a less conservative method as \eqref{torus_marg_test} may detect such differences, one might prefer to rely on a test method that allows slight dissimilarities and stands out only the more relevant ones. Consequently, even if the test \eqref{test_epsilon} is clearly less efficient than our first candidate \eqref{torus_marg_test}, we believe it can be of interest in some practical scenarios, such as several situations appearing in Structural Biology problems. This is further discussed in Section~\ref{sec:discussion}.

\subsection{Application to protein structure analysis}\label{section::proteins}

A method to accurately compare local structural preferences in conformational ensemble models of proteins is useful to investigate sequence-structure-function relationships, allowing for instance to understand the effect of mutations. The local structure of a protein is determined by two dihedral angles usually denoted by $\phi$ and $\psi$, which describe the conformational state of each amino acid residue along the sequence \cite{Branden:1999,Liljas:2009}. For most amino acid types (for all excepting proline and glycine), the distribution of $\phi$ and $\psi$ angles is supported on the same subset of $\mathbb{T}^2$, which, even if there exist some physically forbidden regions \added{due to strong repulsive forces between non-bonded atoms at short distance}, is connected and has a smooth boundary. We can also assume that density is continuously differentiable and strictly positive in its support, so that Assumption~\ref{Lip} is satisfied. 

The aim of this section is to make use of the tests \eqref{torus_marg_test} and \eqref{test_epsilon} to show that the distribution of $(\phi, \psi)$ does not depend only on the amino acid type, but also on the sequence context, and particularly on the closest neighbors. This corresponds to rejecting Flory’s isolated-pair hypothesis \cite{Flory1969}. Even if the importance of the closest neighbors effect is widely accepted in the Structural Biology community \cite{kabat,Gibrat,Betancourt2004,Rata2010,Ting2010}, only purely descriptive methods have been employed to state so, and no goodness-of-fit techniques have been used to the best of our knowledge. For a given amino acid $C$, we denote by $P_C$ the distribution of $(\phi,\psi)$ supported on $\mathbb{T}^2$. If we take into account the identities $L, R$ of $C$'s left and right neighbors, the distribution of $(\phi,\psi)$ is now given by $P_{LCR}$. The objective is to test
\begin{equation}\label{prot_test}
    H_0: P_C=P_{LCR}\qquad\textrm{against}\qquad H_1: P_C\neq P_{LCR}
\end{equation}
to assess whether nearest neighbors significantly affect dihedral angles distributions. An example of two samples drawn from $P_C$ and $P_{LCR}$ is depicted in Figure \ref{fig:example_flory_ejs}. For the analysis presented here, we used a structural database of three-residue fragments (also called tripeptides) extracted from experimentally-determined high-resolution protein structures~\cite{Estana:19}. The large available sample sizes allow us to illustrate the asymptotic behaviour of \eqref{test_epsilon}. We selected the 71 tripeptides $L$-$C$-$R$ for which the database contained more than $3000$ points. For each one, we compared the corresponding sample of $(\phi,\psi)$ values with an equally-sized sample drawn from $P_C$ (sampled from the sub-database containing $(\phi,\psi)$ values from tripeptides having $C$ as central amino-acid). The data were rescaled to $[0,1]\times[0,1]$ before applying the tests. As the $p$-values for the test \eqref{torus_marg_test} are computed by Monte Carlo simulation, they are lower-bounded by $1/N_{MC}$, where $N_{MC}$ is the number of Monte Carlo replicas \cite{PhipsonSmyth2010}. This point is important here, as due to the large number of performed tests, we had to correct $p$-values for multiplicity \cite{holm}. The results are depicted in Figure \ref{fig:flory_pvalues}, where we show the empirical cumulative distribution function of both tests' corrected $p$-values, for three increasing ranges of sample sizes.

\begin{figure}[tp]
\centering
\includegraphics[scale=0.3]{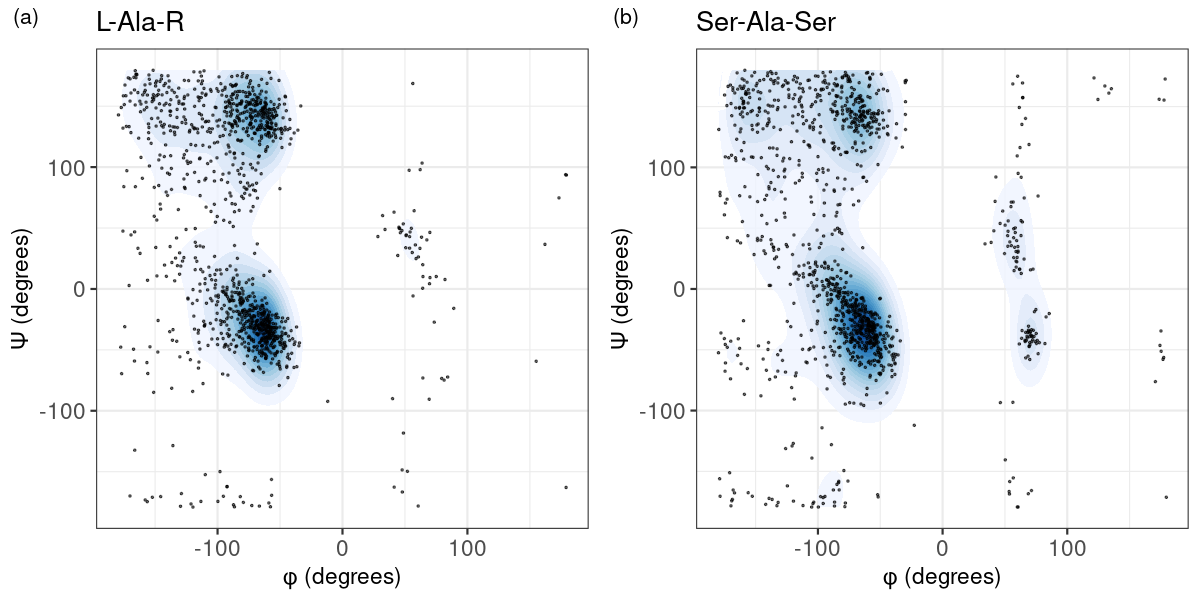}
\caption{(a) Sample and kernel density estimate of alanine $(\phi,\psi)$ distribution when the identity of its left and right neighbor is not taken into account. (b) Sample and kernel density estimate of $(\phi,\psi)$ distribution corresponding to tripeptide Ser-Ala-Ser (a fragment of the three consecutive amino-acids serine, alanine, serine).}
\label{fig:example_flory_ejs}
\end{figure}

From Figure \ref{fig:flory_pvalues}, we can state that the geodesic projection test \eqref{torus_marg_test} strongly rejects the null hypothesis at level $\alpha=0.05$ for the three considered sample size ranges, being all $p$-values truncated to the Monte Carlo precision. \added{Repeating the same analysis for $N_g=3,4$ did not change the shape of the \eqref{torus_marg_test} $p$-values curves, which was expected as higher values of $N_g$ yield a power increase.} A clear asymptotic behaviour is observed for the upper bound technique \eqref{test_epsilon}, as power at level $\alpha$ tends to one when sample sizes increase. Note that, for the largest range of sample sizes, \eqref{test_epsilon} is relatively more efficient than \eqref{torus_marg_test}, due to the Monte Carlo truncation. Both procedures lead to rejection of the null hypothesis, and therefore to the statement that nearest neighbors effect on $(\phi,\psi)$ distributions is statistically significant. This analysis suggests that both \eqref{torus_marg_test} and \eqref{test_epsilon} are suitable for assessing differences on local protein structures, as the available sample sizes (which may be up to $\sim 10^5$ in some practical scenarios) are large enough to state significant conclusions.

\begin{figure}[htp]
    \centering
    \includegraphics[scale=0.4]{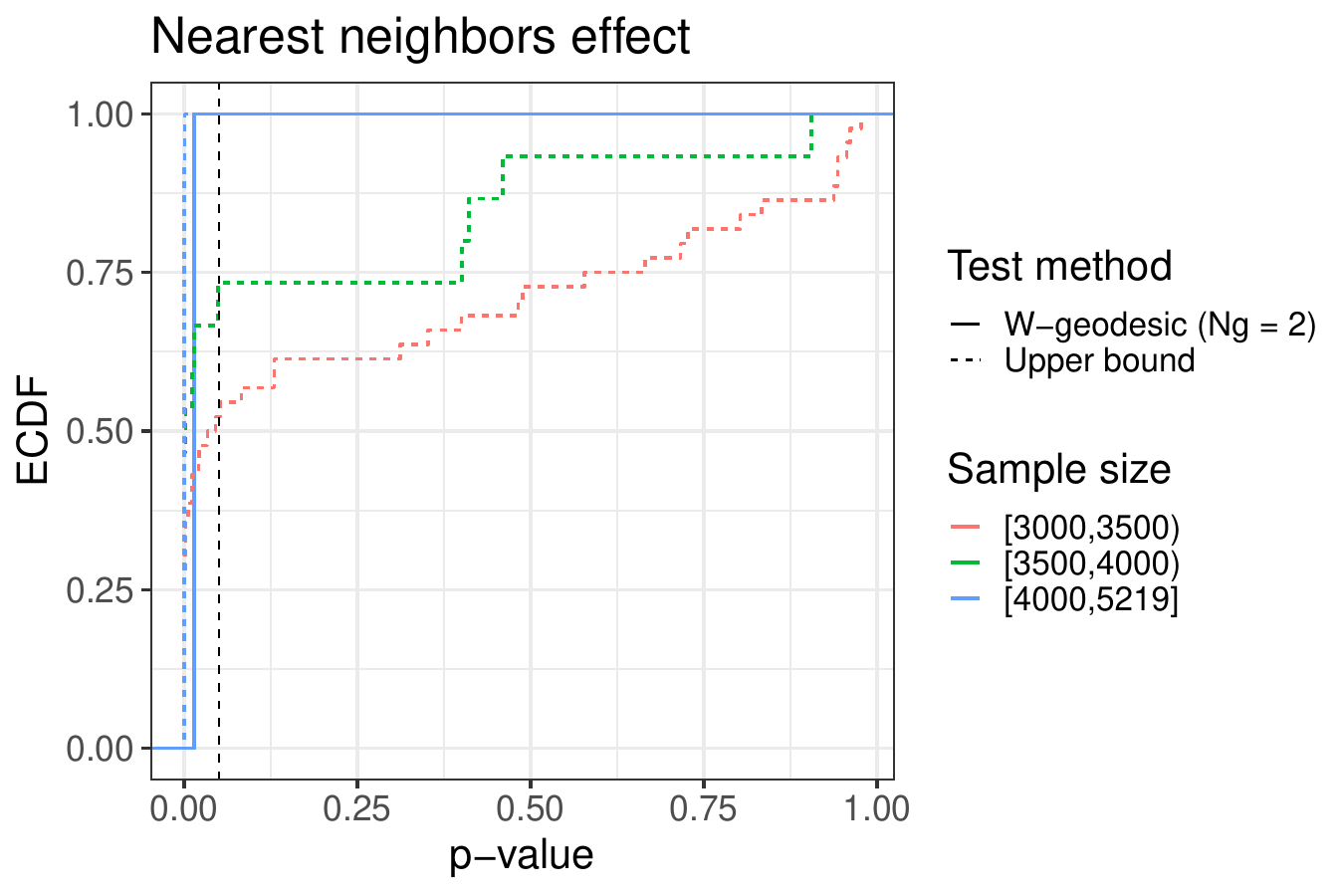}
    \caption{Empirical cumulative distribution function of $p$-values corresponding to test hypotheses \eqref{prot_test} with \eqref{torus_marg_test} (`W-marginal') and \eqref{test_epsilon} (Upper bound) testing methods, for $71$ different combinations of $L$,$C$,$R$. To illustrate the asymptotic behaviour, $p$-values were classified in three ranges of sample sizes. For each test method, $p$-values were corrected for multiplicity using Holm-Bonferroni correction \cite{holm}. Marginal test $p$-values were computed with a Monte Carlo simulation of $N_{MC}=5000$ replicas. The black dashed line indicates an arbitrary significance level of $\alpha=0.05$.}
    \label{fig:flory_pvalues}
\end{figure}

\section{Discussion}\label{sec:discussion}

The main goal of this work was to define suitable two-sample goodness-of-fit tests for measures on $\mathbb{T}^2$. This naturally led us to enrich the existing theoretical results \cite{CORDEROERAUSQUIN1999199, manole2021plugin, McCann} on Optimal Transport for periodic measures. In particular, we studied the shape of the solutions to the Monge problem \eqref{monge}, which allowed the extension of a Central Limit Theorem to $\mathbb{T}^d$, for any $p>1$. Our original inspiration when first investigating these theoretical results was to use the Central Limit Theorem \ref{Theo:centrallim2} to define a two-sample asymptotic test. However, the derived limit distribution degenerates when $P=Q$ and prevents such an application. Nevertheless, the Wasserstein distance on $\mathbb{T}^2$ for the quadratic cost was used to define two efficient testing techniques, which address our initial goals. 

The first approach bypasses the dimension problem by projecting the measures to closed geodesics on $\mathbb{T}^2$ and subsequently test their equality. This required the investigation of how to project samples on closed geodesics and, moreover, how to conveniently sample closed geodesics. The answers we propose here, notably in Sections \ref{appendix:geodesic_sampling} and \ref{appendix:geodesic_projection}, together with their supplied practical implementations, may be of interest in further practical situations. Furthermore, they suggest one possible extension of the Sliced Wasserstein distance \cite{sliced} to the two-dimensional flat torus. As closed geodesics on $\mathbb{T}^2$ are isomorphic to $\mathbb{R}/\mathbb{Z}$, the equality of the projected measures is assessed through a two-sample Wasserstein test on the circle which, to the best of our knowledge, is the only efficient procedure proposed up to now.

The second proposed approach consists in upper-bounding the exact $p$-values \eqref{ideal_pvalue}. This is possible thanks to the derived concentration inequalities (\ref{bound_mean_dev}) for the two-sample empirical Wasserstein distance with the quadratic cost, and to the improved convergence speed of its expectation, as shown in Lemma \ref{Lemma:twosample_exp}. As with any upper-bounding technique, the corresponding test is conservative and only efficient for large sample sizes, which reduces its range of application. However, this test could be relevant in some practical scenarios. For example, Molecular Dynamics simulations (which simulate the temporal evolution of the structure of a protein using force-fields based on physical models), produce samples on $\mathbb{T}^2$ that may present small and meaningless differences when re-running simulations multiple times
with slightly different initial conditions. In such a situation, we expect that the first technique \eqref{torus_marg_test} will reject the equality of their corresponding distributions, while the conservative test \eqref{test_epsilon} will accept differences between independent replicas of the same simulation. Consequently, \eqref{test_epsilon} will only detect more important discrepancies, which are the only ones of interest for practical purposes.

Regarding the practical implementation of both tests, some differences appear with respect to computing time. The main advantage of \eqref{torus_marg_test} is the explicit formulation of Wasserstein distance on one-dimensional spaces, which avoids the use of any Optimal Transport solver. \added{As a result, its time complexity is linear in the sample size}. However, the statistic null-distribution must be simulated with the desired precision, which may slow down the procedure. Note that, in any case, this distribution can be simulated once and be tabulated for any further implementation. The time complexity of \eqref{test_epsilon} exclusively lies on the Optimal Transport solver chosen to compute Wasserstein distance. For very large sample sizes, this might lead to a substantially slower process. 

The issue of two-sample goodness-of-fit testing studied in Section \ref{section:test} remains largely open. Our contribution in this respect is to propose easily implementable goodness-of-fit testing approaches that are built on top of state-of-the-art tools in Optimal Transport. Finding the exact or asymptotic distribution of the Wasserstein statistic in general dimension remains one of the main unsolved problems of the theory of Optimal Transport, preventing the construction of more efficient two-sample goodness-of-fit tests. An asymptotic approach for measures supported on a finite set has been presented in \cite{Sommerfeld2018} and, in the one-dimensional case, \cite{berthet2019} have obtained a CLT under the null $P=Q$ for deviations of $W_p(P_n,Q_n)$ from the true distance $W_p(P,Q)$ (instead of $\mathbb{E}(W_p(P_n,Q_n))$). The results of \cite{berthet2019} are already quite challenging mathematically, and extensions to higher dimensions are clearly beyond the scope of the present work.  Altogether,  we believe that the goodness-of-fit tests defined in this paper constitute a relevant building block for the study of the sequence-structure-function relationship in proteins, and in particular for Intrinsically Disordered Proteins (IDPs), allowing their structural investigation with mathematical guarantees. Furthermore, the interest of the techniques here presented may go beyond the Structural Biology community, as they allow solving the goodness-of-fit testing problem for two distributions lying in general periodic spaces, which appears in various application domains. 

\section*{Code availability}\label{sec:code}

The test approaches presented in this work are implemented in the \textsf{R} package \texttt{torustest}, available at \href{https://github.com/gonzalez-delgado/torustest}{https://github.com/gonzalez-delgado/torustest}, together with the algorithms introduced in Appendix \ref{appendix:geodesics}. Empirical Wasserstein distances were computed using the \textsf{R} package \texttt{transport} \cite{transport}.

\section*{Acknowledgements}
This work was supported by the AI Interdisciplinary Institute ANITI, which is funded by the French ``Investing for the Future – PIA3'' program under the Grant agreement ANR-19-PI3A-0004, and by the ANR LabEx CIMI (grant ANR-11-LABX-0040) within the French State Programme ``Investissements d’Avenir''.

The authors are grateful to the anonymous referees whose comments and suggestions have greatly improved the manuscript. 

\appendix


\section{Geodesics on \texorpdfstring{$\mathbb{T}^2$}{the flat torus}: practical considerations}\label{appendix:geodesics}

This Section is devoted to address some practical questions that arise when defining the test proposed in Section \ref{sec:test_geodesics}. In Appendix \ref{appendix:geodesic_sampling}, we propose a sampling method to prevent the practitioner from explicitly choosing the $N_g$ geodesics, letting them be chosen randomly with respect to a given distribution. In Section \ref{appendix:geodesic_projection}, we propose an algorithm to project a pair of samples on $\mathbb{T}^2$ to a given closed geodesic.

\subsection{Sampling closed geodesics}\label{appendix:geodesic_sampling}

As the closed geodesics on $\mathbb{T}^2$ are given by the canonical projections of straight lines on $\mathbb{R}^2$ with rational slope, sampling from the set of all closed geodesics is equivalent to sampling from $\mathbb{Q}$, which is a countable set. This prevents the sampling to be uniform, in the sense that geodesics can not be equiprobable. Indeed, if $\mathbb{P}(q)=c$ for all $q\in\mathbb{Q}$, by countable additivity $\mathbb{P}(\mathbb{Q})=\sum_{q\in\mathbb{Q}}c$, which is zero if $c=0$ and $\infty$ otherwise. In consequence, as we have to assign different weights to rational slopes, we will opt for \textit{simpler} geodesics to be more probable, in order to ease computational implementations. To achieve so, we can consider the random variable $Q=A/B$, studied in detail in \cite{petroni2019}, where $B$ follows a geometric distribution of parameter $p$ and, for a given denominator $B=b$, $A$ is uniform on $\lbrace 0,1,\ldots,b\rbrace$. Note that $Q$ maps into $\mathbb{Q}\cap[0,1]$. As $p$ increases, $A$ and $B$ take smaller values and the corresponding geodesics revolt less over the torus. Conversely, when $p\rightarrow 0$,
 $\mathbb{P}(Q=q)\rightarrow 0$ for all $q\in\mathbb{Q}\cap[0,1]$, and $Q$ is \textit{asymptotically equiprobable} \cite{petroni2019}. However, small values of $p$ yield extremely high values of $A$ and $B$ and, consequently, unmanageable geodesics with a too-big number of revolutions. The distribution of $Q$ for different values of the parameter $p$ is illustrated in Figure \ref{fig:rational_sampling}. Here, we will ask $p\geq 0.1$ for computational simplicity.
 
 \begin{figure}[tp]
     \centering
     \includegraphics[scale = 0.225]{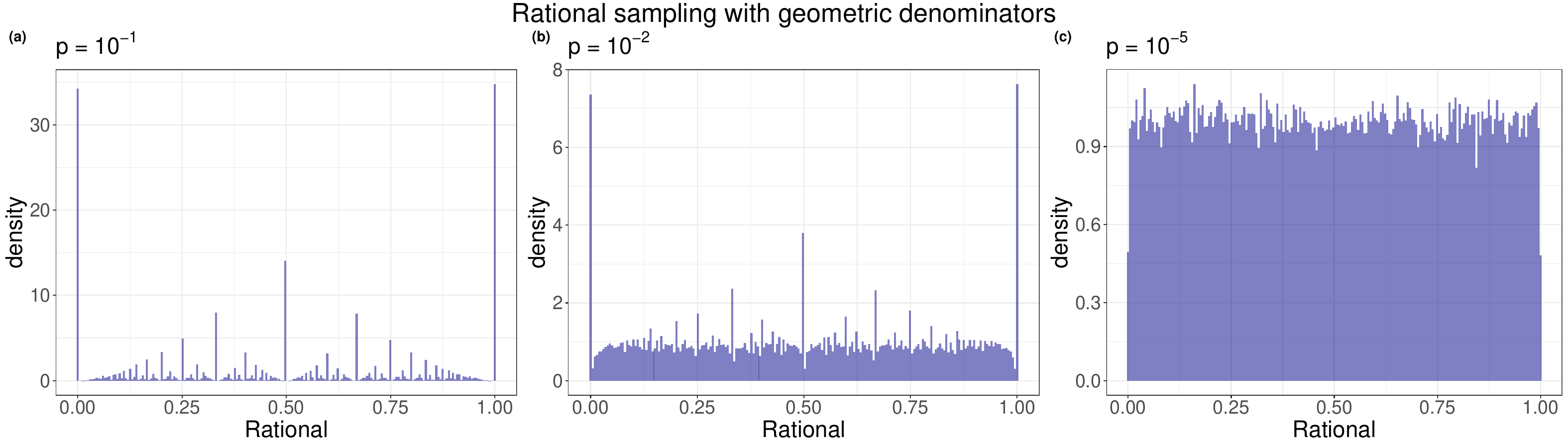}
     \caption{Histograms representing the distribution of the random variable $Q$, for different values of the parameter $p$. For $p=0.1$ (a), rationals with small values of $A$ and $B$ have more weight and, therefore, simpler geodesics are prioritized.}
     \label{fig:rational_sampling}
 \end{figure}
 
 Note that rationals in $\mathbb{Q}\cap[0,1]$ yield to straight lines in $\mathbb{R}^2$ whose director vector $(B,A)$ lies in the first (eq. fifth) octant. To cover all the set of closed geodesics, we uniformly assign an octant to each realization of $Q$ and transform its coordinates appropriately. As we would like all the $N_g$ $p$-values to be independent, we must only accept samples with $N_g$ different geodesics. This may be a problem if $N_g$ is too big, and might require decreasing the value of $p$. Nevertheless, for a small number ($\lesssim 30)$ of geodesics we can keep $p\sim 0.1$ and easily get samples with no repetitions. If one needs to perform the test for large values of $N_g$, we recommend to explicitly choose geodesics a priori to avoid this problem, leaving the sampling method for controlled values of $N_g$. The complete sampling procedure is described in Algorithm \ref{algorithm_rational}, which takes $N_g$ and $p$ as arguments and retrieves $N_g$ director vectors. \added{In Algorithm \ref{algorithm_rational}, $\mathcal{M}_{N_g\times 2}(\mathbb{Z})$ denotes the set of $(N_g\times 2)$-matrices with entire entries and we define $\mathring{\gcd}$ as
 \begin{equation*}
     \mathring{\gcd}(b,a)=\left\{ \begin{array}{lcc}
             \gcd(b,a) &  \textrm{if} & a\neq 0,\\
             b &   \textrm{otherwise},
             \end{array}\right.
 \end{equation*}
for $a,b\in\mathbb{Z}$ with $b\neq 0$.}

\begin{algorithm}
\caption{Geodesics sampling}\label{algorithm_rational}
\begin{algorithmic}
\Require $N_g\in\mathbb{N},\,p=0.1$
\Ensure $G \in\mathcal{M}_{N_g\times 2}(\mathbb{Z})$
\State $G \gets 0\in\mathcal{M}_{N_g\times 2}(\mathbb{Z})$
\While{$|\lbrace i=1,\ldots,N_g\,:\, G_{ik}=G_{jk}\,\,\forall\, k\in\lbrace 1,2\rbrace\quad \textrm{for any }j\in \lbrace1,\ldots,N_g\rbrace\backslash\lbrace i\rbrace\,\rbrace|>0$}
\For{$i \gets 1$ to $N_g$}
\State $b \gets \mathcal{G}(p)$
\State $a \gets \mathcal{U}(\lbrace 0,1,\ldots,b\rbrace)$
\State $u \gets (b,a)/\added{\mathring{gcd}(b,a)}$ \Comment{Director vector in $\mathbb{R}^2$}.

\State $o \gets \mathcal{U}(\lbrace 1,2,3,4 \rbrace)$ \Comment{Octant of the upper semi-circle}.
\If{$o=2$}
\State $u\gets (a,b)/\added{\mathring{gcd}(b,a)}$
\ElsIf{$o=3$}
\State $u\gets (-b,a)/\added{\mathring{gcd}(b,a)}$
\ElsIf{$o=4$}
\State $u\gets (-a,b)/\added{\mathring{gcd}(b,a)}$
\EndIf
\State $G_{i\cdot}\gets u$
\EndFor
\EndWhile
\end{algorithmic}
\end{algorithm}

\subsection{Projection to a closed geodesic}\label{appendix:geodesic_projection}

Let $a,b\in\mathbb{Z}$, with $b\neq 0$, and $u=(a,b)$ the director vector of a straight line $r_u^0$ containing the origin $(0,0)$. Let $I_a$ be the real interval $(\min(a,0),\max(a,0))$, being $I_b$ analogously defined. We aim to project a pair of samples into the geodesic given by the canonical projection of $r_u^0$. To do so, we first consider the finite set $\mathcal{P}_u$ of the points in $I_a\times I_b$ where $r_u^0$ cuts the lines $x=z_a$, $y=z_b$ for $z_a\in I_a\cap\mathbb{Z}$ and $z_b\in I_b\cap\mathbb{Z}$:
\begin{equation*}
    \mathcal{P}_u=\lbrace (x,z)\,:x\in I_a, z\in\mathbb{Z}\rbrace\cap\lbrace (z,y)\,:y\in I_b, z\in\mathbb{Z}\rbrace\cap r_u^0.
\end{equation*}
An example is presented in Figure \ref{fig:point_set}a. Then, we consider the set $\mathcal{L}_u$ of straight lines of director vector $u$ and containing the points of $\mathcal{P}_u\cup \lbrace (0,1),(1,0),(0,0)\rbrace$ transferred to $[0,1]\times[0,1]$ by subtracting the integer part of its coordinates. If we denote $r_v^p$ the straight line containing $p=(p_x,p_y)\in\mathbb{R}^2$ and having $v$ as director vector, we can write $\mathcal{L}_u$ as follows
\begin{equation*}
    \mathcal{L}_u=\lbrace r_u^q\,:\,q=(p_x-\left[p_x\right],p_y-\left[p_y\right])\,,\,p\in\mathcal{P}_u\cup\lbrace (0,1),(1,0),(0,0)\rbrace \rbrace.
\end{equation*}
This is illustrated in Figure \ref{fig:point_set}b. In a first step, each point in $[0,1]\times[0,1]$ will be projected to the closest straight line in $\mathcal{L}_u$. Then, projections $(x_u,y_u)$ outside $[0,1]\times[0,1]$ will be replaced by the elements $(x_u',y_u')\in[0,1]\times[0,1]$ such that $(x_u,y_u)\mathcal{R}(x_u',y_u')$, where $\mathcal{R}$ is the one defined in the begging of Section \ref{first_section}. These two steps are depicted in Figure \ref{fig:sample_projections}.

\begin{figure}[h!]
    \centering
    \includegraphics[scale=0.27]{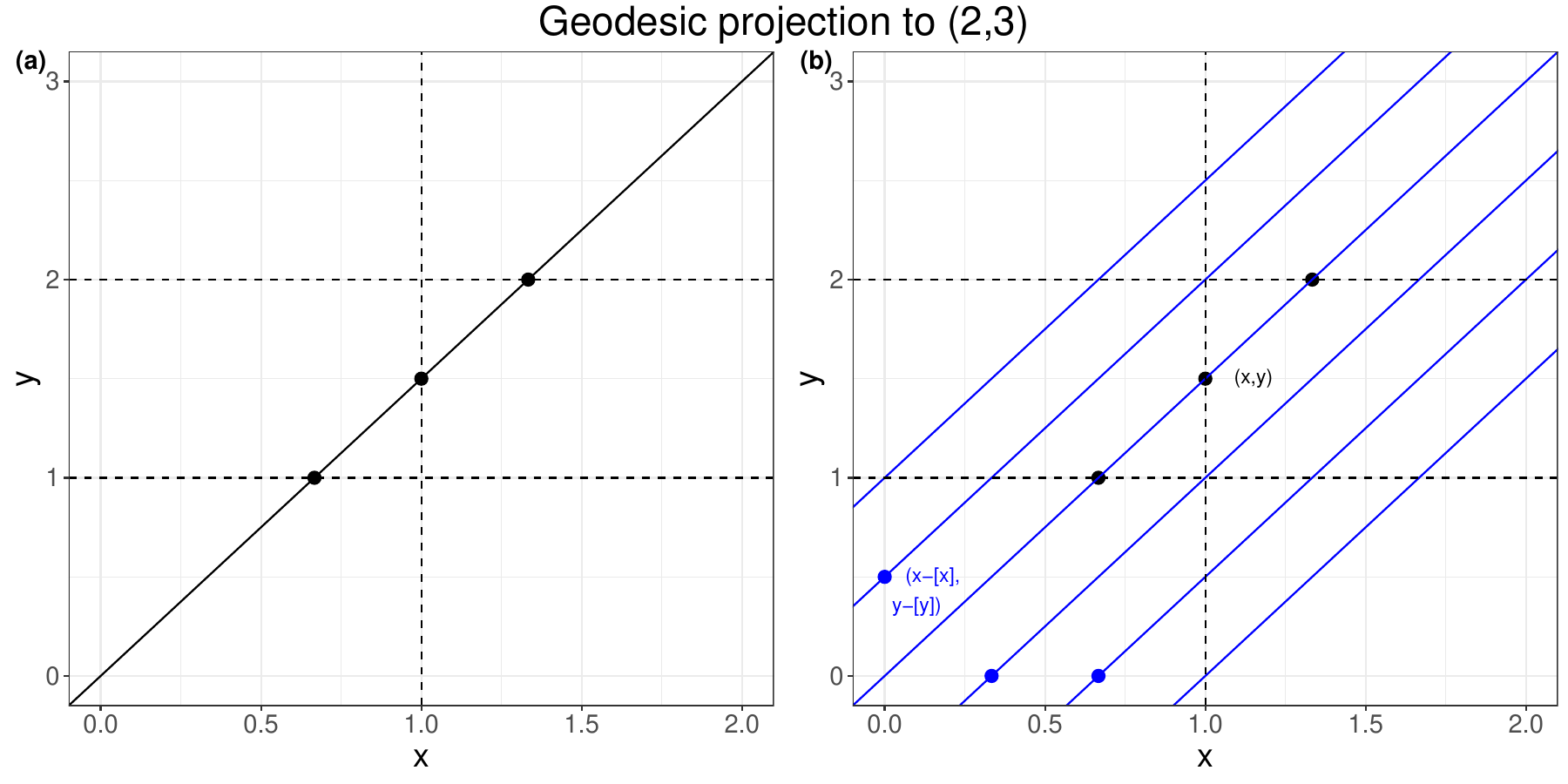}
    \caption{First steps of the projection algorithm for the closed geodesic corresponding to the straight line of director vector $u=(2,3)$. The three points in black constitute the ensemble $\mathcal{P}_u$. In (b), points of $\mathcal{P}_u$ are transferred to $[0,1]\times[0,1]$ by subtracting to their coordinates their integer parts. The blue lines are the elements of $\mathcal{L}_u$.}
    \label{fig:point_set}
\end{figure}

The last step is to relocate all the projections on $\mathbb{R}/\mathbb{Z}$. To do so, we put the segments $\mathcal{L}_u\cap([0,1]\times[0,1])$ in order, following the spiral path. This corresponds to transfer back the points to the straight line $r_u^0$ of Figure \ref{fig:point_set}a. Let $(x_u,y_u)\in r_u^p\in\mathcal{L}_u$. The element $t_u\in\mathbb{R}/\mathbb{Z}$ will be parameterized as 
\begin{equation*}
    t_u=\frac{\|\tilde{p}\|+\|(x_u,y_u)\|}{\|u\|}\in[0,1),
\end{equation*}
where $\tilde{p}\in\mathcal{P}_u\cup\lbrace(0,0)\rbrace$ is the one such that $p_x=\tilde{p}_x-\left[\tilde{p}_x\right]$ and $p_y=\tilde{p}_y-\left[\tilde{p}_y\right]$.

\begin{figure}[h!]
    \centering
    \includegraphics[scale=0.25]{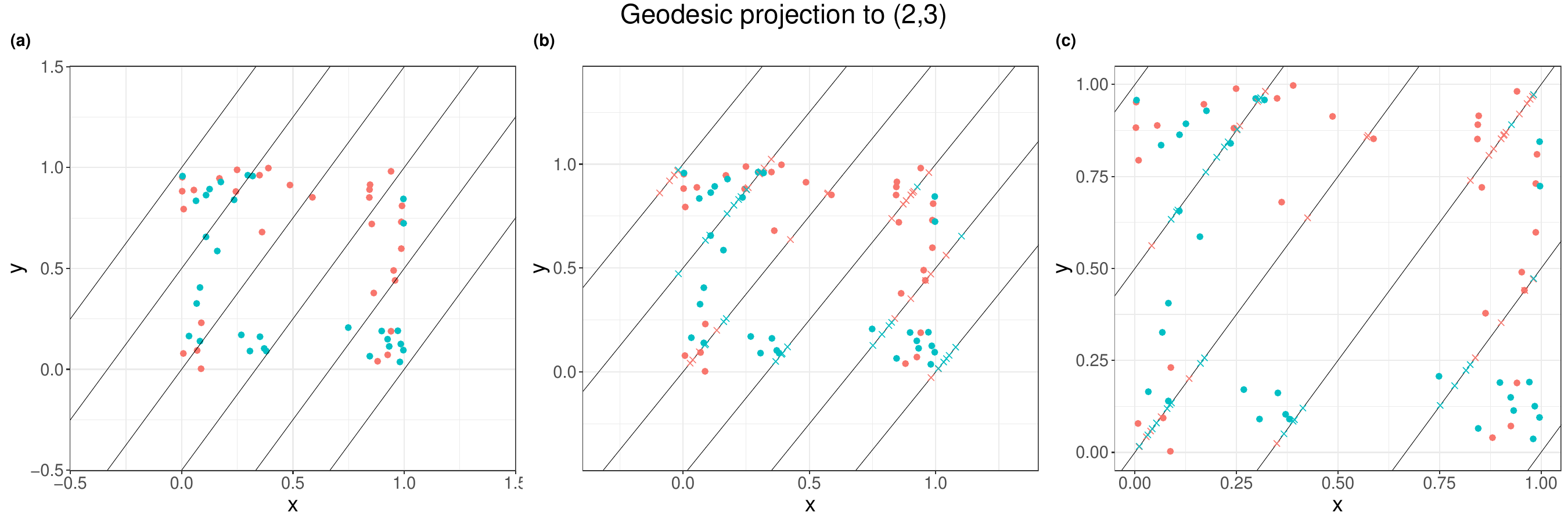}
    \caption{Projection to the closed geodesic given by the director vector $u=(2,3)$ of a pair of samples of size $n=m=30$ drawn from a uniform distribution on $\mathbb{T}^2$. Black lines are the elements of $\mathcal{L}_u$. In $(a)$, the given samples distinguished by colors. In $(b)$, their projections to the closest line in $\mathcal{L}_u$ are represented by colored crosses. In $(c)$, projections outside $[0,1]\times[0,1]$ are relocated in $[0,1]\times[0,1]$ according to the equivalence relation $\mathcal{R}$.}
    \label{fig:sample_projections}
\end{figure}

\section{Proofs}\label{app}

\subsection{Proofs of Section \ref{section_measures}}

\begin{proof}[Proof of Theorem \ref{TheoremExistenceOfArrangements}] Recall that we denote the interior of the support of a  measure $\mu$ (over $\mathbb{T}^d$ or $\R^d$)  as $\mathcal{X}_{\mu}$. Since $\mathbb{T}^d$ is a Polish space, Theorem 4.1 in \cite{Villani2008} implies that there exists a solution $\pi^*$ of \eqref{kant}. Additionally, Theorem 5.10 in \cite{Villani2008} establishes that  $\text{supp}(\pi^*)$ is $d^p$-cyclically monotone. 
More precisely, by Theorem 5.10 in \cite{Villani2008}, this support lies on the graph of the $d^p$-differential
$$\partial^{d^p}f(\bx)=\{\by:\ f(\bz)\leq f(\bx)+ d^p( \bz,\by)-d^p( \bx,\by), \text{ for all $\bz\in\mathbb{T}^d$}\}$$
of a function $f$ solving  \eqref{dual}. Its graph is denoted by $\partial^{d^p}f=\{(\bx,\by):\ \by\in \partial^{d^p}f(\bx)\}$. These definitions of $d^p$-differential  and $d^p$-concave functions apply verbatim to $\|\cdot\|^p $-differential  and $\|\cdot\|^p$-concave functions with the obvious notation. 
Let $\Gamma$ be the set defined in \eqref{Gamma}, $ \{(\x_k+\p_k, \y_{k}+\p_k) \}_{k=1}^n\subset \Gamma$ be a sequence and  ${\sigma:\{ 1, \dots, n \}\rightarrow \{ 1, \dots, n \}}$ be a bijection. Then, the definition of $\Gamma$ implies that
\begin{align*}
\sum_{k=1}^n \|\x_k-\y_k  \|^p & =\sum_{k=1}^n d^p(\bx_k, \by_k) \\
& \leq \sum_{k=1}^n d^p(\bx_k, \by_{\sigma(k)}) \\
& \leq \sum_{k=1}^n \|\x_k+\p_k-{\y}_{\sigma(k)}- \p_{\sigma(k)} \|^p,
\end{align*}
which means that $\Gamma$  is $\| \cdot \|^p-$cyclically monotone. Therefore, $\Gamma\subset \partial^{\| \cdot\|^p}\varphi_p$, for some $\| \cdot \|^p$-concave function $\varphi_p$. Now, recall from Theorem 3.3 and Proposition 3.4 in \cite{GangboMccann}, that 
\begin{enumerate}
    \item  The set of differentiablity $$\operatorname{dom}(\nabla\varphi_p)=\left\lbrace \x\in \R^d: \  \partial^{\| \cdot\|^p}\varphi_p= \left\lbrace \x-\left(\frac{1}{p}\|\nabla\varphi_p(\x) \|\right)^{\frac{2-p}{p-1}}\nabla\varphi_p(\x)\right\rbrace\right\rbrace $$ has full  Lebesgue measure in $\operatorname{dom}(\varphi_p)=\{\x\in \R^d:\ \varphi_p(\x)\in \R\}\supset \mathcal{X}_{\mu_P}$,
    \item  The relation 
    $\mathbf{S}_p(\x) = \x-\left(\frac{1}{p}\|\nabla\varphi_p(\x) \|\right)^{\frac{2-p}{p-1}}\nabla\varphi_p(\x)$ defines a Borel function in $\operatorname{dom}(\nabla\varphi_p)$, and
    \item  The equality $\{\mathbf{S}_p(\x)\}=\{\y: (\x,\y)\in \Gamma\}$ holds for all $\x\in \operatorname{dom}(\nabla\varphi_p)$.
\end{enumerate}
Since $\Gamma\subset  \partial^{\| \cdot\|^p}\varphi_p$, this means that, for all $\x\in \operatorname{dom}(\nabla\varphi_p)$, there exists an unique $\y_{\x}=\mathbf{S}_p(\x) $ such that $(\x,\y_{\x})\in \Gamma$. We observe that, due to the fact that $\mu_P\ll\ell_d$, the measure $\gamma^*=( \mathbf{Id}\times \mathbf{S}_p)\# \mu_P$ on $\R^d\times\R^d$ is well defined, its support is $\| \cdot \|^p$-cyclically monotone and its first marginal is $\mu_P$. We claim that the second marginal is $\mu_Q$. Let $(\bx, \by)\in \pi^*$ be such that $\x+\p\in \operatorname{dom}(\nabla\varphi_p)$, for all $\p\in \Z^d$. Then, for any representative pair, call it  $(\x,\y)\in \R^d$, there exist $\p,\p'\in \Z^d$ such that $(\x+\p,\y+\p')\in \Gamma$. Since
\begin{align*}
     \{\mathbf{S}_p(\x)+\p\}&=\{ \y+\p: \ (\x,\y)\in \Gamma \}\\
     &=\{ \y: \ (\x+\p,\y)\in \Gamma \}\\
     &=\{\mathbf{S}_p(\x+\p)\}=\{\y+\p'\},
\end{align*}
the relation $\by=\overline{\mathbf{S}_p(\bx)}$ holds. Since $\x+\p\in \operatorname{dom}(\nabla\varphi_p)$, for all $\p\in \Z^d$, which is the intersection of sets of full $\mu_P$-measure, the relation $\by=\overline{\mathbf{S}_p(\bx)}$ happens $\mu_P-$a.e. This means that $\pi^*=( \mathbf{Id}\times \overline{\mathbf{S}_p(\bar{\cdot})})\# P$, which proves automatically the claim. Consequently, the existence is proven.

The uniqueness follows from the proof of Corollary 2.4. in \cite{GangboMccann}. Indeed, we can define the set
\begin{equation*}
    S=\bigcup_{\pi^*\text{ solving \eqref{kant}}}\Gamma(\pi^*),
\end{equation*}
where $\Gamma(\pi^*)$ is defined as in \eqref{Gammap} for each $\pi^*$ solving \eqref{kant}. Therefore, taking any finite sequence $ \{(\x_k+\p_k, \y_{k}+\p_k) \}_{k=1}^n\subset S$, there exists at most $n$ different probability measures $\pi_k$, for $k=1, \dots, n$, such that $(\x_k+\p_k, \y_{k}+\p_k) \in \Gamma(\pi_k)$. As all of them are solutions of \eqref{kant} we have, due to the linearity of the optimization in  \eqref{kant} and the convexity of the set $\Gamma(P,Q)$, that the mean $\pi_{0}=\frac{1}{n}\sum_{k=1}^n \pi_k$ is also a solution. Then, its support is contained in a $d^p$-cyclically monotone set, and $\Gamma(\pi_{0})$ is $\|\cdot \|^p$-cyclically monotone, since it contains the sequence  $ \{(\x_k+\p_k, \y_{k}+\p_k) \}_{k=1}^n\subset S$. Consequently, $S$ is $\|\cdot \|^p$-cyclically monotone.

To conclude, repeating the previous arguments, there exists a $\|\cdot \|^p$-concave function $f^S$ such that $S\subset \partial^d f^S $. Moreover, for any other $\varphi_p$, defined as before, it holds that $\partial^d\varphi_p\subset \partial^d f^S $. Then, the equality $$ \x-\left(\frac{1}{p}\|\nabla\varphi_p(\x) \|\right)^{\frac{2-p}{p-1}}\nabla\varphi_p(\x)=\x-\left(\frac{1}{p}\|\nabla f^S(\x) \|\right)^{\frac{2-p}{p-1}}\nabla f^S(\x)$$ holds $\mu_P$-a.e. This proves the uniqueness of $\mathbf{S}_p$ and, consequently, the one of $\mathbf{T}_p$
 
\end{proof}

\begin{proof}[Proof of Theorem \ref{Corollary:equals}] We set $(\x,\y)\in \Gamma$ and observe that $d(\bx, \by)=\| \x-\y\|$. Since $(\bx, \by)\in \text{supp}(\pi^*)$,  Theorem 5.10 in \cite{Villani2008} establishes that if $(f,g)$ solves  \eqref{dual}, it holds
\begin{align*}
f(\bx)=\inf_{\by\in\mathbb{T}^d}\{d(\bx,\by)^p-g(\by)\}.
\end{align*}
Since, for each $(\x,\y)$, there exists $\p\in \Z^d$ such that $d(\bx, \by)=\| \x-\y-\p\|$, we can directly replace $\y$ by $\y+\p$ in the infimum without altering any of the terms. This yields the equality
\begin{align*}
f(\by)=\inf_{z,\y\in \in\R^d \ \bz=\bx}\{\|\z-\y\|^p-g(\by)\},
\end{align*}
and allows to define the following periodic $\| \cdot\|^d$-concave function in $\R^d$: 
\begin{align*}
\hat{\varphi}_p(\x)=\inf_{\y\in \in\R^d}\{\|\x-\y\|^p-g(\by)\}=f(\bx).
\end{align*}
We claim that $\nabla\hat{\varphi}_p=\nabla\varphi_p$ for $\mu_P$-a.e., which implies the equality of both $ \hat{\varphi}_p$ and $ \varphi_p$, in each connected component of $\operatorname{supp}(\mu_P)$. By assumption,  $\operatorname{supp}(\mu_P)=\bigcup_{p\in \Z^d} p+A$ is a union of connected sets. By periodicity we can restrict our study to the connected set $A$, where the claim yields $\nabla\hat{\varphi}_p=\nabla\varphi_p$ for $\ell_d$-a.e. We can apply Theorem 2.6 in \cite{Alberto} to conclude that $\varphi_p=\hat{\varphi}_p+C$ in $A$, thus in $\operatorname{supp}(\mu_P).$ We prove now the claim. Let $\pi^*$ be a measure solving \eqref{kant}, we know (from Theorem 5.10 in \cite{Villani2008}) that its support lies in the graph of $\partial^{d^p}f$. Therefore, we can define the following $\|\cdot\|^p$-cyclically monotone set (note that this is true by repeating the same arguments as for $\Gamma$):
\begin{align}
\label{Gammap}
\notag
\Gamma(\partial^{d^p}f) & = \{ (\x+p,\y+p):\ \\ 
& ( \bx, \by)\in\partial^{d^p}f,\ \x\in[0,1]^d, \ d(\bx, \by)=\| \x-\y\|\ \text{and}\ p\in\Z^d\},
\end{align}
which satisfies the relation $\Gamma(\pi^*)\subset\Gamma(\partial^{d^p}f)$, with the notation of the proof of Theorem~\ref{TheoremExistenceOfArrangements}. Recall that the relation $ \Gamma(\pi^*)\subset \partial^{\|\cdot\|^p}\varphi_p$ also holds. Moreover, by definition we have $\Gamma(\partial^{d^p}f)\subset \partial^{\|\cdot\|^p}\hat{\varphi}_p$. Since $\mu_P $-a.e. the sets $\partial^{\|\cdot\|^p}\hat{\varphi}_p(\x)$ and $\partial^{\|\cdot\|^p}{\varphi}_p(\x)$ are singletons, and, for $\mu_P $-a.e. $\x$, there exists at least one $\y\in \R^d$ such that $(\x,\y)\in  \Gamma(\pi^*)$, then $\partial^{\|\cdot\|^p}\hat{\varphi}_p(\x)=\partial^{\|\cdot\|^p}{\varphi}_p(\x)$. This implies that the functions $\mathbf{S}_p(\x) = \x-\left(\frac{1}{p}\|\nabla\varphi_p(\x) \|\right)^{\frac{2-p}{p-1}}\nabla\varphi_p(\x)$ and $\hat{\mathbf{S}}_p(\x) = \x-\left(\frac{1}{p}\|\nabla\hat{\varphi}_p(\x) \|\right)^{\frac{2-p}{p-1}}\hat{\varphi}_p(\x)$ are equal $\mu_P $-a.e., which proves the claim.\added{ Note that, under continuity of the optimal transport potential,  their uniqueness only need to be fulfilled $\mu_P $-a.e. } 
\end{proof}

\begin{proof}[Proof of Lemma \ref{Lemma:lip}]
\added{Set $\bx,\bz\in \text{dom}(f)$. Then, by definition 
\begin{align*}
     |f(\bx)-f(\bz)|&=|\inf_{\by\in\mathbb{T}^d}\{d^p(\bx,\by)-g(\by)\}-\inf_{\by\in\mathbb{T}^d}\{d^p(\bz,\by)-g(\by)\}|\\
     &=|\inf_{\by\in\mathbb{T}^d}\{d^p(\bx,\by)-g(\by)\}+\sup_{\by\in\mathbb{T}^d}\{-d^p(\bz,\by)+g(\by)\}|\\
     &\leq \sup_{\by\in\mathbb{T}^d}|d^p(\bx,\by)-d^p(\bz,\by)\}|.
\end{align*}
The mean value theorem yields the inequality $a^p-b^p\leq p|a-b|(a^{p-1}+b^{p-1})$, which holds for any  $a,b\geq 0$. Then, the triangle inequality for $d$ leads to
\begin{multline*}
    |f(\bz)-f(\bx)| \leq p \, d( \bz,\bx) \sup_{\by\in\mathbb{T}^d}|d^{p-1}(\bx,\by)+d^{p-1}(\bz,\by)\}\\
 \leq 2\, p\, d( \bz,\bx) \sup_{ \bz, \bx\in\mathbb{T}^d} \left( d^{p-1}( \bz,\bx) \right)\leq 2 \, p\, d^{\frac{p-1}{2}}\, d( \bz,\bx)
\end{multline*}
where the $d^{\frac{p-1}{2}}$ term comes from the trivial bound of the diameter of $\mathbb{T}^d$. This concludes the proof.} \end{proof}


\begin{proof}[Proof of Theorem \ref{Theoremeq}] Set $\bar{p}\in \mathcal{X}_P$ and assume that $ f_p(\bar{p})=0  $.  Set $\epsilon_m\rightarrow 0$  and consider the sequence of balls $\mathbb{B}_{\epsilon_m}(\bar{p})\subset \text{supp}(P)$,  centered at $\bar{p}$ with radius $\epsilon_n$. Since the ball is a continuity set of $P$, after Portmanteau's Theorem, $P_n\xrightarrow{w} P$ implies that for each $m$ there exists a $n_m$ such that $P_n$ gives mass to $\mathbb{B}_{\epsilon_m}(\bar{p})$ for all $n\geq m_n$. Then, we can extract a sequence $\bar{p}_n\rightarrow \bar{p}$ such that $\bar{p}_n\in\mathcal{X}_{P_n}$. As a consequence, we have that $f_n(\bar{p}_n)\in \R$, and we can set $a_n=-f_n(\bar{p}_n)$ and define $h_n=f_n+a_n$. Recall from Lemma~\ref{Lemma:lip} that all such functions are $L$-Lipschitz in their respective domains. Kirszbraun's Theorem (Theorem~B in \cite{Lang1997KirszbraunsTA}) implies that, without loss of generality, we can consider that $h_n$ (resp.  $f_p$) are $2p$-Lipschitz functions defined in the whole $\mathbb{T}^d$.  The previous reasoning implies that $\{h_n\}_{n\in \N}$ is point-wise bounded for the compact sequence $\{\bar{p}_n\}_{n\in \N}$.  Since all such functions are $2p$-Lipschitz, then Arzelá-Ascoli's Theorem concludes that every subsequence $\{h_{n_k}\}_{k\in \N}$ admits a convergent subsequence $\{h_{n_{k_j}}\}_{j\in \N}$.
\added{Let $h$ be one of those limits. Note that the $d^p-$conjugation is continuous in the sense that 
\begin{multline*}
     |h_n^{d^p}(\bx)-h^{d^p}(\bx)|=|\inf_{\by}\{d^p(\by,\bx)-h_n(\bx)\}-\inf_{\by}\{d^p(\by,\bx)-h(\bx)\}|\\
     \leq \sup_{\by}\{  h_n(\bx)-h(\bx)\}=\| h_n-h\|_{\infty},
\end{multline*}
 for all $\by\in \mathbb{T}^d$. By assumption, we have
$$
    A_n= \int  h_n d\alpha_n+ \int  h_n^{d^p} d\beta_n
    - \int  f_pd\alpha-\int  f_p^{d^p} d\beta\to 0,
$$
and 
\begin{multline*}
    \int  h d\alpha+ \int  h^{d^p} d\beta=\\
   \int  h d(\alpha-\alpha_n)+ \int  h^{d^p} d(\beta-\beta_n)+ \int  (h_n-h )d\alpha_n+ \int  (h_n^{d^p}-h^{d^p}) d\beta_n.
\end{multline*}
Then, the inequality $| \int  (h_n-h )d\alpha_n|\leq  \|h_n-h \|_{\infty}$ gives $
 \int  h d\alpha+ \int  h^{d^p} d\beta=0$. The function $h$ is thus an optimal transport potential.  The uniqueness described in Theorem~\ref{Corollary:equals} and the fact that $\bar{p}_n\rightarrow \bar{p}$ and $h_n(\bar{p}_n)=f_p(\bar{p})=0$ conclude that $f_p$ is the unique possible limit of such subsequences in $\operatorname{dom}(f_p)$.}

\end{proof}

\begin{proof}[Proof of Theorem \ref{Theo:centrallim2}] 

Note that as Theorem~\ref{Theoremeq} holds, since probability measures are supported in a compact set, the torus, then the reasoning of \cite{DelBarrio2019} can be imitated. Here the main steps of the proof for the one-sample case are given. For further details about the proof we refer to the original text.

Efron-Stein inequality, see Chapter 3.1 in~\cite{Boucheron}, states that if $(X'_1,\dots, X'_n) $ is an independent copy of $(X_1,\dots, X_n )$,  then we have the bound
\begin{align*}
\text{Var}(f(X_1,\dots, X_n )) & \leq  \sum_{i=1}^n \mathbb{E}(f(X_1,\dots, X_n )\\ 
& -f(X_1,\dots,X_{i-1}, X'_{i}, X_{i+1},\dots, X_n ).)_+^2.
\end{align*} 
Moreover, if $X_1,\dots, X_n $ are i.i.d,  such inequality can be written as
$$ \text{Var}(f(X_1,\dots, X_n ))\leq n \mathbb{E}(f(X_1,\dots, X_n )-f(X'_1,\dots, X_n ))_+^2.$$
Set the empirical measures $P_n=\frac{1}{n}\sum_{k=1}^n\delta_{X_k}$ and $P'_n=\frac{1}{n}(\delta_{X'_1}+\sum_{k=2}^n\delta_{X_k})$,  and
the values $R_n=\mathcal{T}_p(P_n,Q)-\int f_p dP_n$  and $R'_n=\mathcal{T}_p(P'_n,Q)-\int f_p dP'_n$. Let $f_n$ and $f_n'$ be solutions of the dual problem \eqref{dual} of $\mathcal{T}_p(P_n,Q)$ and $ \mathcal{T}_p(P'_n,Q) $ respectively.  Then from \eqref{dual} we derive that
\begin{align*}
(R_n-R'_n)_+\leq &\frac{1}{n}|f_n(X_1)-f_p(X_1)-f_n(X'_1)+f_p(X'_1)|\\
&+|f_n'(X_1)-f_p(X_1)-f_n'(X'_1)+f_p(X_1')|,
\end{align*}
which together with Theorem~\ref{Theoremeq} yields
\begin{align*}
n(R_n-R'_n)_+\xrightarrow{a.s.}0.
\end{align*}
Since the probability measures are supported in the torus, which is compact, then
$n^2\mathbb{E}(R_n-R'_n)_+^2\rightarrow 0.
$ Finally, we conclude by the so called Efron-Stein's inequality.

\end{proof}

\subsection{Proofs of Section \ref{section:test}} \label{proofs_3}

\begin{proof}[Proof of Lemma \ref{lemma_uniform}] If $G$ is the distribution function of the uniform distribution on $\mathbb{R}/\mathbb{Z}$, we have that
\begin{equation}\label{g_alpha}
    (G-\alpha)^{-1}(t)=\inf\lbrace s\,:\, s>t+\alpha\rbrace=t+\alpha.
\end{equation}
Plugging \eqref{g_alpha} in \eqref{w_circle}, we have
\begin{equation}
    \mathcal{T}_2(P^c,U)=\underset{\alpha\in\mathbb{R}}{\inf}\int_0^1\left(F^{-1}(t)-t-\alpha\right)^2\,dt,
\end{equation}
where the optimal value for $\alpha$ can be found by analytically minimizing the function
\begin{align*}
    H(\alpha)=\int_0^1\left(F^{-1}(t)-t-\alpha\right)^2\,dt=\int_0^1 \left(F^{-1}(t)-t\right)^2\,dt+\\\alpha^2-2\alpha\int_0^1\left(F^{-1}(t)-t\right)\,dt,
\end{align*}
which satisfies $H'(\alpha)=0\Leftrightarrow \alpha=\int_0^1 \left(F^{-1}(t)-t\right)\,dt$.
\end{proof}

\begin{proof}[Proof of Lemma \ref{lemma:free_null_distribution}] Let $\mathcal{D}([0,1])$ denote the Banach space of right-continuous functions on $[0,1]$ with left limits. 
Donsker's Theorem \cite[Theorem 14.3]{billing}, states the weak convergence in $\mathcal{D}([0,1])$ of the empirical process $\sqrt{n}(F_n-F)$ for $n\rightarrow\infty$ to the standard Brownian bridge $\mathbb{B}(F(t))$. As the operator $h\,:\,\mathcal{D}([0,1])\longrightarrow\mathbb{R}$ defined as
\begin{equation}
    h(f)=\int_0^1\left(f(t)-\int_0^1 f(s)\,ds\right)^2\,dt=\int_0^1 f(t)^2\,dt-\left(\int_0^1 f(s)\,ds\right)^2
\end{equation}
is continuous, the continuous mapping Theorem \cite[Theorem 1.3.6]{vaart1996weak} yields that
\begin{equation}
    n\mathcal{T}_2(P^c_n,U)\underset{n}{\overset{w}{\longrightarrow}}\int_0^1 \mathbb{B}(t)^2\,dt-\left(\int_0^1\mathbb{B}(t)\,dt\right)^2,
\end{equation}
when $P^c=U$, which concludes the proof.
\end{proof}

\begin{proof}[Proof of Proposition \ref{prop:free_circle_statistic}] Keeping the notation of the proof of Lemma \ref{lemma:free_null_distribution}, after Theorem 1 in \cite{Ramdas2015} we have that, when $P^c=Q^c$,
\begin{equation}
    \sqrt{\frac{nm}{n+m}}\left(G_m^{-1}(F_n)-\mathbb{I}\right)\underset{n,m}{\overset{w}{\longrightarrow}}\mathbb{B}(t),
\end{equation}
in $\mathcal{D}([0,1])$, where $\mathbb{I}$ denotes the identity function. Finally, using the same arguments as in the proof of Lemma \ref{lemma:free_null_distribution}, the result is proved. 
\end{proof}

\begin{proof}[Proof of Proposition \ref{prop:consistency_marg}] Note that
\begin{equation*}
    \mathbb{P}(\pi_{nm}^c=1)=\mathbb{P}\left(T_{nm}^c\geq c_{nm}^c(\alpha)\right)=\mathbb{P}\left(\mathcal{T}_2(G_m\#P_n,U)\geq\frac{n+m}{nm}c^c_{nm}(\alpha)\right).
\end{equation*}
On one hand, we have that
\begin{align*}
    c_{nm}^c(\alpha)=\inf\lbrace t>0:F_{nm}^c(t)\geq 1-\alpha\rbrace=\inf\lbrace t>0:\mathbb{P}_{H_0}(T_{nm}^c>t)\leq \alpha\rbrace=\\
    \frac{nm}{n+m}\inf\lbrace t>0:\mathbb{P}_{H_0}(\mathcal{T}_2(G_m\#P_n,U)>t)\leq \alpha\rbrace.
\end{align*}
Under the null hypothesis, $G_m\#P_n\overset{w}{\rightarrow}U$. Thus, $\mathcal{T}_2(G_m\#P_n,U)\rightarrow 0$ in probability (recall Section \ref{section_asyimptotics}). In consequence, for every $\varepsilon>0$ and every $\alpha>0$, we have
\begin{equation}\label{one_hand}
    \frac{n+m}{nm}\,c_{nm}^c(\alpha)\leq \varepsilon
\end{equation}
for sufficiently large $n,m$. On the other hand, when $P^c\neq Q^c$, $\mathcal{T}_2(G_m\#P_n,U)\rightarrow \mathcal{T}_2(G\#P,U)>0$ in probability which, together with \eqref{one_hand}, proves the result.
\end{proof}

\begin{proof}[Size of \eqref{torus_marg_test}] \added{Let us prove that \eqref{torus_marg_test} controls the type I error at any significance level $\alpha>0$. Indeed, if $H_0$ denotes the null hypothesis \eqref{null_torus}, we have}
\begin{align}\label{sizeeq}\added{
   \mathbb{P}_{H_0}\left(\pi_{nm,N_g}^{g}=0\right)= \mathbb{P}_{H_0}\left(\min_{i=1}^{N_g} p_i \leq \frac{\alpha}{N_g}\right)=\mathbb{P}_{H_0}\left(\bigcup_{i=1}^{N_g} \left\lbrace p_i \leq \frac{\alpha}{N_g}\right\rbrace\right)\leq }\nonumber \\ \added{
   \sum_{i=1}^{N_g}\mathbb{P}_{H_0}\left(p_i\leq \frac{\alpha}{N_g}\right)= N_g\frac{\alpha}{N_g}=\alpha,}
\end{align}
\added{where the first equality in \eqref{sizeeq} is ensured as every $p_i$ follows a uniform distribution under the null.}
\end{proof}

\begin{proof}[Proof of Proposition \ref{prop:consistency_twomarg}]
\added{Suppose that $P^c_j\neq Q^c_j$ w.l.o.g. for some $j\in\lbrace 1,\ldots,N_g\rbrace$. After \eqref{torus_marg_test}, we have
\begin{align*}
    \mathbb{P}(\pi_{nm,N_g}^{g}=1)=\mathbb{P}\left(\min_{i=1}^{N_g} p_i\leq \frac{\alpha}{N_g}\right)\geq \mathbb{P}\left(p_j\leq \frac{\alpha}{N_g}\right).
\end{align*}
Then, as the right side of the previous inequality tends to $1$ after Proposition \ref{prop:consistency_marg}, so does the left side, which ends the proof.}
\end{proof}

\begin{proof}[Proof of Remark \ref{remark_abscont}]
\added{Suppose that $\mu_P\ll\ell_2$ and project with respect a given direction ${\bf e}_1$. As an immediate consequence of the monotone convergence theorem,  $\ell_{2}(A\times \R)=0$, for any Lebesgue null set $A\subset \R$. Consequently, by hypothesis $0=\mu_P(A\times \R)=\mu_P^1(A)$. Here, $\mu_P^1$ is the projected measure of  $\mu_P$ to the direction ${\bf e}_1$. Then, for any null set $B$ in $\mathbb{R}/\mathbb{Z}$ the leveraged set $ \tilde{B}=\bigcup_{s\in \Z}(s+B)$ is a Lebesgue null set, so 
that $\mu_P^1(\tilde{B})=0$ and  $P^c(B)=0$. }
\end{proof}

\begin{proof}[Proof of Theorem \ref{Theorem::bound}]
Note that $ \mathcal{T}_2(P_n,Q_m)=\mathcal{T}(X_1,\dots, X_n,Y_1,\dots, Y_m)$ is a function of $X_1,\dots, X_n$ and $Y_1,\dots, Y_m$.  For each $\x_1,\dots,\x_n,\y_1,\dots,\y_m\in\mathbb{T}^d$  and $\x'\in\mathbb{T}^d$ let $\pi$ and $\pi'$ be both joint measures such that
\begin{align*}
\mathcal{T}:=\sum_{i,j}&d(\x_i-\y_j )^2\pi_{i,j}=\mathcal{T}(\x_1,\dots, \x_n,\y_1,\dots, \y_m)\\
s.t. \ \ & \sum_{i,j}\pi_{i,j}=\frac{1}{n}, \  \ j=1,\dots, m,  \\
&\sum_{i,j}\pi_{i,j}=\frac{1}{m}, \ \ i=1,\dots, n,
\end{align*}

 and

\begin{align*}
\mathcal{T}':=\sum_{j}&d(\x'_1-\y_j )^2\pi'_{1,j}+\sum_{i>1,j}d(\x_i-\y_j )^2\pi'_{i,j}=\mathcal{T}(\x'_1,\dots, \x_n,\y_1,\dots, \y_m)\\
s.t. \ \ & \sum_{i,j}\pi'_{i,j}=\frac{1}{n}, \  \ j=1,\dots, m,  \\
&\sum_{i,j}\pi'_{i,j}=\frac{1}{m}, \ \ i=1,\dots, n.
\end{align*}
Then we have that
\begin{align*}
\mathcal{T}'\leq \sum_{j}d(\x'_1-\y_j )^2\pi_{1,j}+\sum_{i,j}d(\x_i-\y_j )^2\pi_{i,j},
\end{align*}
which implies
\begin{align*}
\mathcal{T}'-\mathcal{T}&\leq \sum_{j}\left( d(\x_i-\y_j )^2-d(\x_1-\y_j )^2\right)\pi_{1,j}\\
&\leq \sum_{j}\frac{1}{2}\pi_{1,j}=\frac{1}{2n},
\end{align*}
where the second inequality comes from the fact that  $d^2(\x,\y)\leq 1/2$ in $\mathbb{T}^d$.  By symmetry we also obtain the reverse inequality.  Doing the same with $\y'_1$ and $\y_1$ we obtain the bound $\frac{1}{2m}$. By using McDiarmid’s inequality, see \cite{mcdiarmid}, we derive that
\begin{align*}
\mathbb{P}\left(\mathcal{T}_2(P_n,Q_m)-\mathbb{E}\mathcal{T}_2(P_n,Q_m)>t\right)\leq \exp\left(-\frac{nm}{n+m} 8t^2 \right).
\end{align*}
\end{proof}

\begin{proof}[Proof of Proposition \ref{prop:epsilon}] Let $P=Q$. After Lemma \ref{Lemma:twosample_exp}, for every $\varepsilon>0$ these exists $N_\varepsilon\in\mathbb{N}$ such that for all $n,m\geq N_\varepsilon$, $\mathbb{E}\mathcal{T}_2(P_n,Q_m)\leq\varepsilon$. Using explicitly the convergence speed, we can find the relationship between $\varepsilon$ and $N_\varepsilon$:
\begin{equation}
    \frac{\log N_\varepsilon}{N_\varepsilon}=\frac{\varepsilon}{C},
\end{equation}
where $C>0$ is an unspecified constant. Then, directly from Theorem \ref{Theorem::bound}, we can bound \eqref{ideal_pvalue} as
\begin{gather*}
\mathbb{P}_{H_0}\left(\mathcal{T}_2(P_n,Q_m)>t\right)\leq \exp\left(-\frac{nm}{n+m} 8(t-\mathbb{E}\mathcal{T}_2(P_n,Q_m))^2 \right)\leq\\
\exp\left(-\frac{nm}{n+m} 8(t-\varepsilon)^2 \right),
\end{gather*}
for all $n,m\geq N_\varepsilon$.
\end{proof}

\begin{proof}[Proof of Proposition \ref{prop:asymptotic_consistency}]
\added{Let us first prove that \eqref{test_epsilon} is asymptotically of level $\alpha$. Let $\varepsilon>0$ and $N_\varepsilon\in\mathbb{N}$ such that for all $n,m\geq N_\varepsilon$, the test \eqref{test_epsilon} controls type I error. As we are taking the limit $n,m\rightarrow\infty$, we can choose $n,m$ large enough such that they surpass $N_\varepsilon$. Then, consistency is ensured by Proposition~\ref{prop:epsilon}.}

\added{To conclude, we prove the consistency under fixed alternatives such that $\mathcal{T}_2(P,Q)>\varepsilon$. First, note that
\begin{multline*}
     \mathbb{P}(\pi_{nm,\varepsilon}^{ub}=1)=\mathbb{P}\left(\mathcal{T}_2(P_n,Q_m)\geq\varepsilon +  \sqrt{-\frac{n+m}{8nm}\log\alpha}\right)\\=\mathbb{P}\left(\sqrt{\frac{mn}{n+m}}(\mathcal{T}_2(P_n,Q_m)-\mathcal{T}_2(P,Q))\geq \sqrt{\frac{mn}{n+m}}(\varepsilon-\mathcal{T}_2(P,Q)) +  \sqrt{-\frac{1}{8}\log\alpha}\right).
\end{multline*}
Now, \eqref{asymptotic_statistic_correctCent} implies, under the alternative, the stochastically boundedness of the left hand side.  However, the right hand side  is clearly unbounded if $ \mathcal{T}_2(P,Q)>\epsilon $. Consequently,
\begin{equation*}
    \underset{n,m\rightarrow\infty}{\lim} \mathbb{P}(\pi_{nm,\varepsilon}^{ub}=1)=1,
\end{equation*}
which concludes the proof.}
\end{proof}



\clearpage

\section{Supplementary figures}

\begin{figure}[h!]
\centering
\includegraphics[scale=0.4]{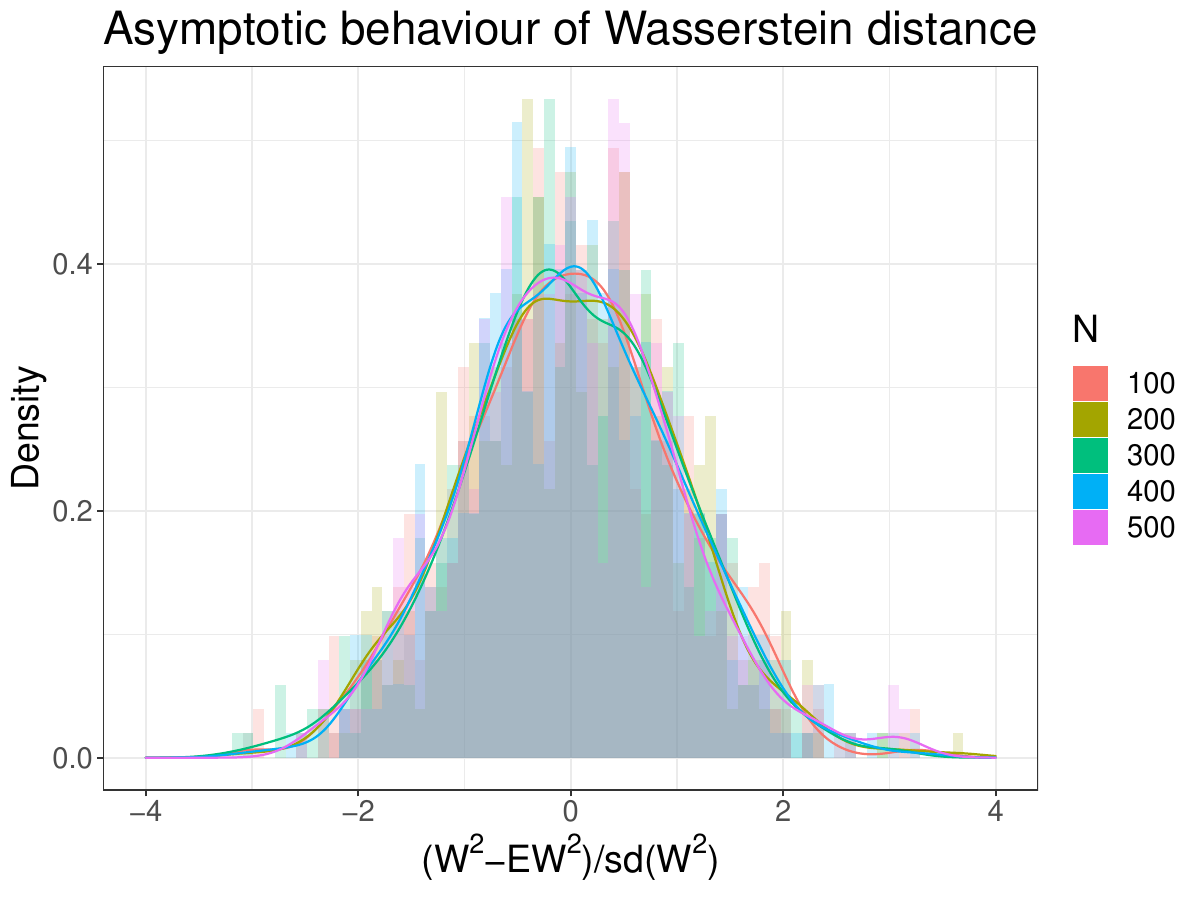}
\caption{Normalized asymptotic deviations from the mean of squared Wasserstein distance between two bivariate von Mises distributions of same means $(\mu,\nu)=(0,0)$ and different concentration parameters $(\kappa_1,\kappa_2,\kappa_3)=(0,0,0)$ and $(\kappa_1,\kappa_2,\kappa_3)=(2,2,0)$. The figures show the corresponding histograms and the associated kernel density estimates, for different sample sizes.}
\label{fig:tcl}
\end{figure}

\begin{figure}[h!]
\centering
\includegraphics[scale=0.4]{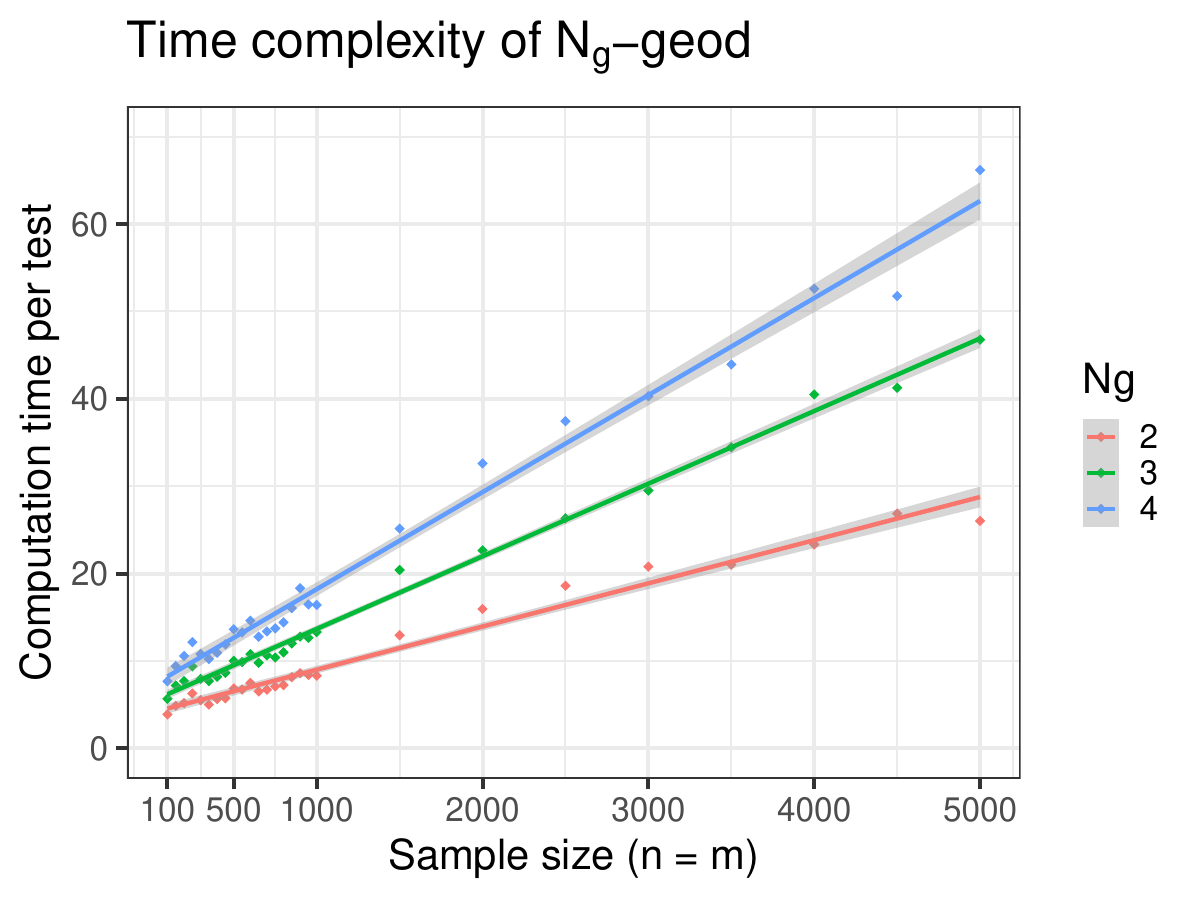}
\caption{Empirical time complexity of \eqref{torus_marg_test} for $N_g=2,3,4$. Each point corresponds to the average computation time per test among $200$ repetitions of \eqref{torus_marg_test} for two equally sized samples drawn from a bivariate von Mises distributions of equal means $(\mu,\nu)=(0,0)$ and different concentration parameters $(\kappa_1,\kappa_2,\kappa_3)=(0,0,0)$ and $(\kappa_1,\kappa_2,\kappa_3)=(1,1,0)$. The lines correspond to a linear regression performed for each value of $N_g$.}
\label{fig:time}
\end{figure}

\bibliographystyle{ejs-template} 
\bibliography{Biblio_lip}       


\end{document}